\let \@fnsymbol\@arabic
\newtheorem{theorem}{Theorem}
\newtheorem{lemma}{Lemma}
\theoremstyle{definition}
\newtheorem{definition}{Definition}
\newtheorem{remark}{Remark}
\newcommand{\R}{\mathbb{R}}
\newcommand{\E}{\mathbb{E}}
\newcommand{\bA}{\mathbf{A}}
\newcommand{\bB}{\mathbf{B}}
\newcommand{\bH}{\mathbf{H}}
\newcommand{\bP}{\mathbf{P}}
\newcommand{\bR}{\mathbf{R}}
\newcommand{\bS}{\mathbf{S}}
\newcommand{\bW}{\mathbf{W}}
\newcommand{\bh}{\mathbf{h}}
\newcommand{\bx}{\mathbf{x}}
\newcommand{\bX}{\mathbf{X}}
\newcommand{\bC}{\mathcal{C}}
\newcommand{\bv}{\mathbf{v}}
\newcommand{\iid}{\stackrel{iid}{\sim}}
\newcommand{\Xhat}{\hat{\bX}}
\newcommand{\supp}{\operatorname{supp}}
\newcommand{\Wn}{\bW_n}
\newcommand{\bSigma}{\mathbf{\Sigma}}
\newcommand{\RDPG}{\operatorname{RDPG}}
\begin{document}

	\begin{frontmatter}
		
		\title{On estimation and inference in latent structure random graphs}
		\runtitle{Latent Structure Model}
		\author{\fnms{Avanti} \snm{Athreya}\corref{}\ead[label=e1]{dathrey1@jhu.edu}},
		\author{\fnms{Minh} \snm{Tang}\corref{}\ead[label=e2]{mtang8@ncsu.edu}},
		\author{\fnms{Youngser} \snm{Park}\corref{}\ead[label=e3]{youngser@jhu.edu}},
\author{\fnms{Carey E.} \snm{Priebe}\corref{}\ead[label=e4]{cep@jhu.edu}}
					\address{3400 N Charles Street, Baltimore, MD, 21218 and 2311 Stinson Drive, Raleigh, NC 27607\\
				 \printead{e1,e2,e3,e4}}
					\affiliation{Department of Applied Mathematics \& Statistics, Johns Hopkins University \\ Department of Statistics, North Carolina State University}
					\runauthor{Athreya et al.}
					\end{frontmatter}
\maketitle
	
%
%
%
%
	\section{Introduction}\label{sec:Intro}
 The last half-century has seen remarkable technical developments in random graph inference, the result of an integration across probabilistic combinatorics, classical statistics, and computer science. The ubiquity of graphs and networks in many applications, from urban planning to epidemiology to neuroscience, guarantees an enduring supply of real-world problems that rely on accurate graph inference for their resolution. Of course, a number of graph inference problems are comfortingly familiar and not necessarily peculiar to graphs per se: the parametric estimation of a common connection probability in an independent-edge random graph, for example, or the nonparametric estimation of a degree distribution.  Other inference tasks, such as community detection, are more graph-centric, and still others, such as vertex nomination \cite{FisLyzPaoChePri2015,Coppersmith2014} arise only in a network context. Nevertheless, even graph-specific inference tasks can frequently be resolved by appropriate Euclidean embeddings of graph data, and such Euclidean representations of graphs allow for a suite of classical statistical methods for Euclidean data, from estimation to classification to hypothesis testing \cite{athreya_survey}, to be effectively deployed in graph inference. 
 
 Advances in computational capacity now enable us to feasibly store and manipulate huge networks, but extracting from these data sets meaningful estimates and predictions, or inferring underlying relevant structure, remains a real challenge; at present, we are often confined to the realm of exploratory data analysis. We face, therefore, an ongoing need to synthesize the model-based inference procedures of twentieth-century statistics with the data-driven, algorithmically-propelled methods of twenty-first century machine learning. 
 In his landmark polemic on the two ``cultures," \cite{breiman_statsci}, Breiman described this very divide, and argued persuasively for the gains that machine learning can deliver. While we agree, we remain believers in a theoretical framework for graph inference that begins first with a compelling graph model. Such a model is useful not only because it allows us to generate, say, theoretical bounds for error rates in graph estimation procedures, but also because it offers a unifying perspective for graph analysis. 

  In this spirit, we present here the  {\em latent structure model} (LSM) for random graphs. We demonstrate that the LSM is tractable and useful, especially for inference tasks that involve the discovery or exploitation of lower-dimensional geometric structure. The LSM sits between two workhorse random graph models, the stochastic block model (SBM) \cite{Holland1983} and the random dot product graph model (RDPG) \cite{young2007random}.  That is, latent structure models impose parametric and geometric requirements on distributions that are more elaborate than those of a stochastic block model but more constrained than those of a typical random dot product graph.
   
 Latent structure random graphs are a special case of latent position random graphs \cite{hoff_raftery_handcock,diaconis2007graph,asta_cls}, which are a type of inhomogeneous Erd\H{o}s-R\'{e}nyi random graph \cite{bollobas2007phase} in which edges between any pairs of vertices arise independently of one another.  Every vertex in a latent position random graph has associated to it a (typically unobserved) {\em latent position}, itself an object belonging to some (often Euclidean) space $\mathcal{X}$.  Probabilities of an edge between two vertices $i$ and $j$, $p_{ij}$, are then a function $\kappa(\cdot,\cdot): \mathcal{X} \times \mathcal{X} \rightarrow [0,1]$ (known as the {\em link function}) of their associated latent positions $(x_i, x_j)$. Thus $p_{ij}=\kappa(x_i, x_j)$, and as mentioned previously, edges between vertices arise independently of one another. Given these probabilities, the entries $\bA_{ij}$ of the adjacency matrix $\bA$ are conditionally independent Bernoulli random variables with success probabilities $p_{ij}$. We consolidate these probabilities into a matrix $\bP=(p_{ij})$, and we write $\bA \sim \bP$ to denote this relationship.
 
 In a $d$-dimensional random dot product graph, the latent space is an appropriately-constrained subspace of $\mathbb{R}^d$, and the link function is simply the dot product of the two latent $d$-dimensional vectors.  A quintessential inference problem in an RDPG setting is the estimation of latent positions from a single observation of a suitably large graph. The linear algebraic foundation for an RDPG makes such an inference problem especially amenable to spectral methods, such as singular value decompositions, of adjacency or Laplacian matrices. Indeed, these spectral decompositions have been the basis for a suite of approaches to graph estimation, community detection, and hypothesis testing for random dot product graphs. (For a comprehensive summary of these techniques, see \cite{athreya_survey}.)
 Because of the invariance of the inner product to orthogonal transformations, however, the RDPG exhibits a clear nonidentifiability: latent positions can be estimated only up to an orthogonal transformation. Note that the popular stochastic blockmodel (SBM) can be regarded as a random dot product graph. In an SBM, there are a finite number of possible latent positions for each vertex---one for each block---and the latent position exactly determines the block assignment for that vertex. 
%
 
 Random dot product graphs are often divided into two types: those in which the latent positions are fixed, and those in which the latent positions are themselves random. Specifically, we consider the case in which the latent position $X_i \in \mathbb{R}^d$ for vertex $i$ is drawn from some distribution $F$ on $\mathbb{R}^d$, and we further assume that the latent positions for each vertex are drawn independently and identically from this distribution $F$. A common graph inference task is to infer properties of $F$ from an observation of the graph alone. For example, in a stochastic block model, in which the distribution $F$ is discretely supported, we may wish to estimate the point masses in the support of $F$. In the graph inference setting, however, there are two sources of randomness that culminate in the generation of the actual graph: first, the randomness in the latent positions, and second, {\em given these latent positions}, the conditional randomness in the existence of edges between vertices. As such, the task of inferring properties of the underlying distribution $F$ from a mere observation of the adjacency matrix $\bA$ is more complicated than the classical problem of inferring properties of $F$ directly from the $X_i$'s, the latter of which of course represent an i.i.d. sample from $F$. This is because {\em these latent positions $X_i$ are not observed in the first place}.  The key to such inference is the initial step of consistently estimating the unobserved $X_i$'s from $\bA$, and then using these estimates, denoted $\hat{X}_i$, to infer properties of $F$. 
 
 
 Now, an RDPG with i.i.d. latent positions allows for a wide range of possible distributions $F$, and by contrast, the SBM imposes the constraint of a discrete support for $F$. A natural midpoint between these two is to constrain $F$ to belong to a {\em parametric family} of distributions on some space $\mathcal{S}$: that is, $F \in \{F_{\theta}, \theta \in \mathbb{R}^l\}$, $\supp F \subset \mathcal{S}$. A useful example to keep in mind is $F \sim \textrm{Beta}(a,b)$, with $\supp F=[0,1]$, the unit interval, and $l$, the dimension of the parameter space, given by $l=2$. Here the latent positions are random points in the unit interval, so the associated RDPG has a one-dimensional latent space, and an inference task of interest is to estimate or test hypotheses about the parameters $a$ and $b$. We remark that the Beta distribution for latent positions provides a nice illustration of the fact that the dimension of the random dot product graph may be different than the number of unknown parameters.  In the case of the Beta distribution, the support $\mathcal{S}$ of $F$ is known, but in other cases, inferring the geometry of the support of the distribution may be part of our larger task.  
 
 
 Because constraints on $F$ impose additional structure---structure that can be both geometric, such as prescriptions on the parameter space or the support, and functional, such as limitations on the class of distributions themselves---we call graphs of this type {\em latent structure model (LSM)} graphs (see Def. \ref{def:LSM} in Sec. \ref{sec:Def_Note_Background}). Much of the rest of this manuscript is devoted to demonstrating that (a) statistical methodology for RDPGs can be successfully applied to yield estimates for model parameters and to conduct broader inference tasks in LSMs and (b) the structure within LSMs can be leveraged to obtain sharp rates of convergence for such estimates.
In short, the latent structure model is flexible, amenable to a suite of existing techniques for inference on RDPGs, and a useful starting point for models with more intricate geometric structure.
 
 
 As we have already emphasized, our approach to inference for a latent structure model is first to treat LSMs as RDPGs and use the considerable literature on the consistency and asymptotic normality of spectral estimates for latent positions in RDPGs \cite{lyzinski13:_perfec, lyzinski15_HSBM, athreya2013limit, STFP-2011,tang14:_nonpar}. More precisely, if the latent space dimension $d$ of an RDPG is known, \cite{lyzinski13:_perfec} and \cite{lyzinski15_HSBM} show that a rank $d$ singular value decomposition of the adjacency matrix $\bA$ gives a consistent estimate, denoted $\hat{\bX}$, for the matrix of latent positions $\bX$. In addition, \cite{athreya2013limit} demonstrates that as the number of vertices $n$ of the graph increases, the rows of $\hat{\bX}$ have an asymptotically normal distribution about the true latent positions. Further, \cite{tang14:_nonpar} establishes that the underlying distribution $F$ can be consistently recovered via kernel density estimation with these spectral estimates of the true latent positions. Most critically, \cite{tang14:_nonpar} ensures the convergence of an empirical process of the spectrally-estimated latent positions. This functional central limit theorem allows us to prove that in the latent structure model, when the latent position distribution belongs to a parametric family, one can effectively use these spectral estimates as ``data" to construct an M-estimate (essentially a quasi-maximum likelihood estimate) of the underlying parameter $\theta$, and, surprisingly, {\em still obtain a parametric rate of convergence} of such a quasi-MLE to its true value.  That the introduction of spectral estimates in place of the true latent positions does not change the asymptotic rate of convergence of this estimator is a testament to how accurate and valuable are the spectral estimates themselves, not only for recovering the true latent positions but for a variety of subsequent graph inference tasks.
 
 As an illustration of our result, we consider inference when the latent positions are distributed as points along the 1-dimensional Hardy-Weinberg curve in the simplex, defined as the image of $$r:[0,1] \rightarrow \mathbb{R}^3; r(t)=(t^2, 2 t (1-t), (1-t)^2)$$
 Let $p$ be the arclength reparameterization of this curve.
 Suppose that $t_i \in [0,1]$ are drawn independently from a common $G_{\theta}=\textrm{Beta}(\theta=(a, b))$ distribution, and consider an RDPG with latent positions $X_i=p(t_i)$ that lie on the Hardy-Weinberg curve. We note that the latent positions are points in the ambient space $\mathbb{R}^3$, which is the dimension of the resulting RDPG. But in fact, of course, the latent positions lie on the two-dimensional unit simplex $$(x_1, x_2, x_3): \sum_{i} x_i=1,\, \, 0 \leq x_i \leq 1$$
and more precisely still, they lie on the one-dimensional submanifold that is the Hardy-Weinberg curve. 

If we observe only the adjacency matrix $\bA$ for a random dot product graph with these latent positions $\bX$, how might we estimate or conduct tests about the parameters $a$ or $b$ of this underlying distribution $G$? One approach is to spectrally embed $\bA$ to obtain the point cloud of estimated latent positions (organized, as before, as rows of a matrix $\hat{\bX}$) in $\mathbb{R}^3$; rotate this point cloud appropriately (due to the nonidentifiability of the RDPG); project these estimated points onto the Hardy-Weinberg curve; pull these projected points back into the unit interval through $p^{-1}$, and use these projected, pulled-back points in the unit interval, denoted $\hat{Y}_i$, as ``data" in the estimation of the parameters of $G$. See Figure \ref{fig:HW1}, below, for a representation of the estimated latent positions of this LSM graph around the Hardy-Weinberg curve.
\begin{figure}[H]
	\centering
	\includegraphics[scale=0.5]{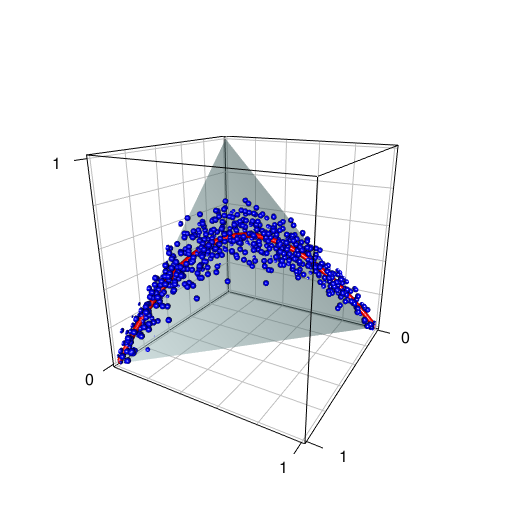}
	\caption{Estimated latent positions, with $n=1000$, in a tubular neighborhood about the Hardy-Weinberg curve when underlying distribution $G_{\theta}$ is $\textrm{Beta}(a=1, b=1)$.}
	\label{fig:HW1}
\end{figure} 
That is, we might plug these points $\hat{Y}_i$ in the unit interval---which, we stress, are neither independent nor identically distributed---into the estimating equations that define familiar maximum likelihood estimates for $(a, b)$. Though this procedure may be straightforward to write, it poses computational and mathematical pitfalls, even in the case when the geometric structure of the Hardy-Weinberg curve is known {\em a priori}. Also, while each step of this procedure is sensible, there are many sources of error. Given the cumulative impact of noise and dependence in the latent position estimates and bias from the projections and pullbacks, one might reasonably view this recipe as little more than a principled hack.  As it happens, however, in the case of a parametric latent structure model with known support, this quasi-$M$-estimation delivers both consistency and efficiency.

In more complicated latent structure models, the family of distributions $F$ may have support $\mathcal{S}$ that is unknown, and must also be inferred. This is a significantly more intricate problem, one for which our methodology currently admits fewer guarantees. To unify latent structure model estimation and inference over different levels of model complexity, we devote our next section to elucidating the different types of latent structure models, from models with known or parametric distributions on known support to models with nonparametrically-specified distributions over unknown support.

 
 We organize the paper as follows.  In Section \ref{sec:Def_Note_Background}, we define the latent structure model and relate it to stochastic block models and random dot product graphs. In Section \ref{sec:prior_results_RDPG}, we summarize key theoretical results for random dot product graphs, including consistency and normality, as well as a Donsker-class functional central limit theorem, for spectral estimates of latent positions. In Section \ref{sec:asymp_eff_LSM}, we demonstrate how these results can be exploited to give a parametric rate of convergence for estimates of LSM parameters. In Section \ref{sec:Examples}, we consider examples of estimation in specific latent structure models, including models with known support and models with parametric support that must be learned or estimated from the data. We conclude with an analysis of the right and left hemsipheres of the {\em Drosophila} larval connectome, which we view as a nonparametric latent structure model with unknown support. The framework of an LSM permits us to resolve, as a statistical test of hypothesis, the neuroscientific question of {\em bilateral homology}---that is, structural similarity across hemispheres---of the  {\em Drosophila} connectome. Finally, in Section \ref{sec:Conclusion}, we close with a discussion of the relevance of the LSM and associated open problems.
 
 \section{Definitions, notation, and background}\label{sec:Def_Note_Background}
 In our notation, we will use boldface $\bH$ to represent a matrix, and we use $H_i$ to represent the $i$th row of this matrix. We use $| \cdot |$ to represent Euclidean distance, with the dimension being clear from context. We use ${\top}$ to represent tranpose, and $\perp$ to denote the orthogonal complement. We use $P$ to denote probability and $\mathbb{E}$ to denote expectation.
 
 To begin, we define a {\em graph} $G$ to be an ordered pair of $(V,E)$ where $V$ is the {\em vertex} or {\em node} set, and $E$, the set of {\em edges}, is a subset of the Cartesian product of $V \times V$. In a graph whose vertex set has cardinality $n$, we will usually represent $V$ as $V=\{1, 2, \dots, n\}$, and we say there {\em is an edge between} $i$ and $j$ if $(i,j)\in E$.  The {\em adjacency} matrix $\bA$ provides a compact representation of such a graph:
 $$\bA_{ij}=1 \textrm{   if   }(i,j) \in E, \textrm{  and  }\bA_{ij}=0 \textrm{ otherwise. }$$ 
 Where there is no danger of confusion, we will often refer to a graph $G$ and its adjacency matrix $\bA$
 interchangeably.

 \subsection{Models}
 Since our focus is on latent structure models and we wish to exploit lower-dimensional geometric structure, we first clarify the notation of the smallest appropriate dimension for a latent structure model.
  \begin{definition}[Minimal subspace dimension]\label{def:min_subspace_dim}
  	Let $\tilde{\gamma}:[0,1] \rightarrow \mathbb{R}^k$ be a smooth (twice continuously differentiable) map and let $\bC=\textrm{Image}(\tilde{\gamma})$ be the curve that is the image of this map. We say that $\bC$ has {\em minimal subspace dimension} $d$, denoted $md(\bC)=d$, if 
  	$$\min\{\dim(S): S \subset \mathbb{R}^k \textrm{ a subspace}, \bC \subset S\}=d$$
  	\end{definition}
  	We stress that this linear subspace requirement is crucial---the Hardy-Weinberg curve lies in the simplex, which is a two-dimensional surface, but this plane does not pass through the origin; the simplex is not a linear subspace. Hence the minimum subspace dimension of the Hardy-Weinberg curve is 3, not 2.
  	
  	Next, since we frame our latent structure models as special cases of random dot product graphs, we define inner product distributions and inner product curves.
  	\begin{definition}
  		[$d$-dimensional inner product distribution and inner product curve]\label{def:innerprod}
  		Let $F$ be a probability distribution whose support is given by $\supp F={\bf \mathcal{X}}_d \subset \R^d$.
  		We say that $F$ is a
  		\emph{$d$-dimensional inner product distribution}
  		on $\R^d$ if for all $x,y \in \mathcal{X}_d=\supp F$, we have $x^{\top} y \in [0,1]$.
  		Next, let $\bC$ be a smooth (twice continuously differentiable) curve defined as $\bC=\textrm{Im}(\tilde{\gamma})$ where $\tilde{\gamma}:[0,1]\rightarrow \mathbb{R}^k$ is smooth. Suppose $md(C)=d$, and define $\gamma$ by
  		$$\gamma:[0,1] \rightarrow \mathbb{R}^d; \gamma=\pi_{k,\bC} \circ \tilde{\gamma}$$ 
  		where $\pi_{k, \bC}$ is the projection map from $\mathbb{R}^k$ onto $\bC$. We say that $\bC$ is a {\em non-self-intersecting, $d$-dimensional inner product curve} if (i) $\gamma$ is injective and has smooth inverse $\gamma^{-1}$ and (ii) for all $x, y \in C$, $x^{\top}y \in [0,1]$.
  	\end{definition}

The definition of inner product curves and distributions on suitable subsets of Euclidean space is a building block to the construction of a random dot product graphs and latent structure random graphs. We start with a random dot product graph, which we define as an independent-edge random graph
 for which the edge probabilities are given by the dot products of the latent
 positions associated to the vertices. The latent positions are necessarily constrained to have inner-product distributions.
 We restrict our attention here to graphs that are undirected and
 loop-free.
 \begin{definition} [Random dot product graphs \cite{young2007random}] \label{def:RDPG}
 	Let $F$ be a $d$-dimensional inner product distribution
 	with $X_1,X_2,\dots,X_n \iid F$, collected in the rows of the matrix
 	$$\bX=[X_1, X_2, \dots, X_n]^{\top} \in \R^{n \times d}.$$
    (Note that each $X_i$ is a column vector in $\mathbb{R}^d$, and in the matrix $\bX$, these column vectors are transposed and organized as rows.)
 	Suppose $\bA$ is a symmetric, hollow random adjacency matrix whose above diagonal entries are distributed as follows:
 	\begin{equation} \label{eq:rdpg}
 	P[\bA|\bX]=
 	\prod_{i<j}(X_i^{\top}X_j)^{\bA_{ij}}(1-X_i^{\top}X_j)^{1-\bA_{ij}}
 	\end{equation}
That is, conditional on the latent positions $\bX$, the above-diagonal entries $\bA_{ij}$ are independent Bernoulli random variables with $P(\bA_{ij} =1)=X_i^{\top} X_j$. To denote this, we write $(\bA,\bX) \sim \RDPG(F,n)$ and say that $\bA$ is the adjacency
 	matrix of a {\em random dot product graph (RDPG) of dimension or rank at most} $d$ and with {\em latent positions} given by the rows of $\bX$. If $\bX \bX^{\top}$ is, in fact, a rank $d$ matrix, we say $\bA$ is the adjacency matrix of a rank $d$ random dot product graph.
 	
 	If, instead, the latent positions are given by a fixed matrix $\bX$ and, given this matrix, the graph is generated according to Eq.\eqref{eq:rdpg}, we say that $\bA$ is a realization of a random dot product graph with latent positions $\bX$, and we write $\bA \sim \mathrm{RDPG}(\bX)$.
 	
 	Finally, let $\rho_n$ be a sequence of positive real numbers less than one, and suppose
		 	$X_1,\dotsc, X_n {\sim} F$ be independent random variables with $F$ an inner-product distribution. 
 		 	We say that  $(\bX,\bA)\sim \mathrm{RDPG}(F)$ {\em with sparsity factor} $\rho_n$ if $\bA$ is symmetric, hollow and consists of independent above-diagonal entries $\bA_{ij}$ distributed as $
 		  	\bA_{ij} \sim\mathrm{Bernoulli}(\rho_n X_i^\top X_j)$.
 \end{definition}
%
 
 \begin{remark} [Nonidentifiability]\label{rem:nonid}
 	Given a graph distributed as an RDPG,
 	the natural task is to recover the latent positions $\bX$ that gave
 	rise to the observed graph.
 	However, the RDPG model has an inherent nonidentifiability:
 	let $\bX \in \R^{n \times d}$ be a matrix of latent positions
 	and let $\bW \in \R^{d \times d}$ be a unitary matrix.
 	Since $\bX \bX^{\top} = (\bX \bW) (\bX \bW)^{\top}$, it is clear that the latent positions
 	$\bX$ and $\bX\bW$ give rise to the same distribution over graphs in
 	Eq.~\eqref{eq:rdpg}.
 	Note that most latent position models, as defined below, also suffer from similar types of non-identifiability as edge-probabilities may be invariant to various transformations.
 \end{remark}

 Random dot product graphs are special cases of more general {\em latent-position random graphs}, which are independent-edge random graphs in which each vertex has a latent position and for which connection probabilities are given by appropriate functions of these latent positions.  
  Conversely, while latent position models generalize the random dot product graph, RDPGs, in turn, are a generalization of the more limited {\em stochastic blockmodel} (SBM) graph \cite{Holland1983} and its variants such as the degree-corrected SBM \cite{karrer2011stochastic} and the mixed membership SBM \cite{Airoldi2008}. The stochastic block model is an independent-edge random graph whose vertex set is partitioned into $K$ groups, called {\em blocks}, and the stochastic blockmodel is typically parameterized by (1) a
 $K\times K$ matrix of probabilities $\bB$ of adjacencies between vertices in
 each of the blocks, and (2) a {\em block-assignment vector} $\tau:[n] \rightarrow [K]$ which assigns each vertex to its block. That is, for any two vertices $i,j$, the probability of their connection is 
 $$\bP_{ij}=\bB_{\tau(i), \tau(j)},$$
 and we typically write $\bA \sim \mathrm{SBM}(\bB, \tau)$.
 Here we present an alternative definition in terms of the RDPG model.
 
 \begin{definition}[Positive semidefinite $k$-block SBM]\label{def:PS_SBM} We say an RDPG with latent positions $\bX$ is an SBM with $K$ blocks if the
 	number of distinct rows in $\bX$ is $K$, denoted $\bX_{(1)}, \dots, \bX_{(K)}$  In this case, we define the
 	block membership function $\tau:[n]\mapsto [K]$ to be a function
 	such that $\tau(i)=\tau(j)$ if and only if $\bX_i=\bX_j$.  
 	We then write $$\bA \sim \mathrm{SBM}(\tau, \{\bX_{(i)}\}_{i=1}^{K})$$
 	In addition, we also consider the case of a stochastic block model in which the block membership of each vertex is randomly assigned. More precisely, let $\pi \in (0,1)^{K}$ with $\sum_{k=1}^{n} \pi_k=1$ and suppose that $\tau(1), \tau(2), \dots, \tau(n)$ are now i.i.d. random variables with distribution $\mathrm{Categorical}(\pi)$; that is, $\mathrm{Pr}(\tau(i) = k) = \pi_k$ for all $k$. Then we say $\bA$ is an {\em SBM with i.i.d block memberships}, and we write $$\bA \sim \mathrm{SBM}(\pi, \{X_{(i)}\}).$$
 \end{definition}
 
 With RDPGs and SBMs defined, we now define {\em latent structure random graphs} or {\em latent structure models} (LSMs) as, in effect, random dot product graphs of dimension $d$ whose latent position distributions are determined by a family of distributions on some appropriate, potentially lower-dimensional submanifold, which we call the {\em support} $\mathcal{S}$ of the distribution. Our definition begins with the simplest such models, in which the support $\mathcal{S}$ of $F$ is known and the knowledge of the parameters uniquely identifies the distribution within a family, to increasingly more complex cases in which the support of the latent position distribution may itself be unknown. Latent structure models have two critical components: one, a known or estimable curve or manifold, the {\em structural support}, on which the latent position distribution $F$ is supported; and two, a further so-called {\em underlying} distribution $G$ on some other fixed subset of Euclidean space (in our one-dimensional setting, this is the unit interval). Therefore, they naturally bifurcate along these two axes: first, whether the structural support is known, can be constrained to belong to a certain family of submanifolds, or is (mostly) unconstrained;  second, whether the underlying distribution $G$ in Euclidean space is known, parametrically specified, or nonparametric. We consolidate these hierarchical notions in Definition \ref{def:LSM} below.
 
For simplicity and clarity, in this paper we define and focus on one-dimensional latent structure models, in which the structural support $\mathcal{S}$ is a curve $\mathcal{C}$. Given a finite length inner product curve $\bC$ with minimal subspace dimension $d$, let $T_{\bC}(R)$ be a tubular neighborhood (see \cite{lee_manifold}) of radius $R$ about $\bC$. To avoid pathologies, we restrict ourselves to structural support curves that satisfy certain regularity conditions.
\begin{definition} A smooth, finite length inner product curve $\mathcal{C}$ of minimal subspace dimension $d$ is said to be an {\em LSM-regular structural support curve} if there exists a tubular neighborhood of positive radius $R>0$ about $\mathcal{C}$ on which the projection map $\pi_C: \mathbb{R}^d \mapsto \bC$ satisfying $\pi_C(x)=\textrm{argmin}_{y \in \bC} |x-y|$ is well-defined and twice-continuously differentiable.
\end{definition}
\begin{definition}[One-dimensional latent structure model]\label{def:LSM}
 Let $\bC$ be an LSM-regular curve of minimal subspace dimension $d$. Let $p(t):[0,1] \rightarrow \bC$ denote the arclength reparameterization of $\bC$.  Let $\mathcal{G}$ be a family of distributions $\{G: G \in \mathcal{G}\}$ on $[0,1]$ with associated distribution measures $\{\mu_G: G \in \mathcal{G}\}$. Let $\mathcal{F}$ denote the family of associated induced distributions on $\bC$; that is, for each $F$ in $\mathcal{F}$, the distribution measure $\mu_F$ is given by $\mu_F(B)=\mu_G(p^{-1}(B))$ for any Borel set $B$.
 We say that an RDPG with i.i.d latent position matrix $\bX$ is a {\em parametric latent structure random graph with known univariate support $\bC$ and underlying distribution $G$} if the latent position vectors $X_i$ are distributed according to $F=G(p^{-1})$ where $G$ belongs to some regular parametric family $\mathcal{G}_{\Theta}=\{G_{\theta}; \theta \in \Theta \subset \mathbb{R}^l\}$ on $[0,1]$ and $p$ and $\bC$ are known.  We write
 	$$X_i \sim F=G_{\theta}(p^{-1}),\theta \in \Theta;\, \,  \supp F=\bC$$
 	We say that an RDPG with iid latent position matrix $\bX$ is a {\em nonparametric latent structure random graph with known univariate support $\bC$} if  $p$ and $\bC$ are both known, and $F=G(p^{-1})$, where $G \in \mathcal{G}$ with $\mathcal{G}$ a family of distributions on $[0,1]$ that is not a subset of any regular parametric family of distributions on $[0,1]$.
 
 	Next, we say that an RDPG with i.i.d latent position matrix $\bX$ is a {\em parametric latent structure random graph with parametrically determined univariate support and underlying distribution $G$} if, first, the rows $X_i$ of $\bX$ are given by the distribution $F$ on $\bC$, where $F=G(p^{-1})$ and $G$ belongs to a parametric family of distributions $\{G_{\theta}: \theta \in \Theta \subset \mathbb{R}^l\}$ on $[0,1]$; and second, the map $p:[0,1]\rightarrow \bC$ is uniquely determined (up to orientation; see Remark \ref{rem:orientation_nonid}) by a vector $\eta \in \mathbb{R}^q$. We say that an RDPG with iid latent position matrix $\bX$ is a {\em nonparametric latent structure random graph with parametrically determined univariate support} if  $p:[0,1]\rightarrow \bC$ is uniquely determined (up to orientation) by a vector $\eta \in \mathbb{R}^q$ and $F=G(p^{-1})$, where $G \in \mathcal{G}$ with $\mathcal{G}$ a family of distributions on $[0,1]$ that is not a subset of any regular parametric family of distributions on $[0,1]$.
 	
 	Finally, we say that an RDPG with i.i.d latent position matrix $\bX$ is a {\em parametric latent structure random graph with nonparametric univariate support and underlying distribution $G$} if the rows $X_i$ of $\bX$ are given by distribution $F=G(p^{-1})$, where $G$ belongs to a parametric family of distributions $\{G_{\theta}: \theta \in \Theta \subset \mathbb{R}^l\}$ on $[0,1]$ and $p$ is not constrained to be uniquely determined (up to orientation) by a fixed vector $\eta \in \mathbb{R}^q$. We say that an RDPG with i.i.d latent positions matrix $\bX$ is a {\em nonparametric latent structure random graph with nonparametric univariate support} if the rows $X_i$ of $\bX$ are given by distribution $F=G(p^{-1})$, where $G \in \mathcal{G}$ with $\mathcal{G}$ a family of distributions on $[0,1]$ that is not a subset of any regular parametric family of distributions on $[0,1]$, and $p$ is not constrained to be uniquely determined up to orientation by any fixed, finite-dimensional vector. 
 	\end{definition}
 	\begin{remark}[Arclength and nonidentifiability up to orientation]\label{rem:orientation_nonid}
Because $F$ is defined as an induced distribution on the curve $\mathcal{C}$, the specification of the arclength parameterization $p$ is necessary to avoid nonidentifiability. We remark that the arclength parameterization is unique up to orientation; that is, up to the transformation $t \mapsto 1-t$. Thus, our latent structure models are identifiable only up to an orientation. Hence latent structure models have two distinct sources of nonidentifiability. The first is a nonidentifiability inherited directly from the random dot product graph, namely invariance of the inner product to orthogonal transformation.  The second is the parametrization nonidentifiability that governs how the map $p:[0,1] \mapsto \mathcal{C}$ is written, or equivalently, the location of $p(0)$ on $\mathcal{C}$.
 	\end{remark} 
Figure \ref{fig:HW1mix} depicts precisely such a latent structure model.  Here, the underlying distribution $G$ on $[0,1]$ is a mixture of two Beta distributions, shown in panel (a), and  the curve is the Hardy-Weinberg curve. On this curve, in panel (b), we see the distribution of the latent positions on the Hardy-Weinberg curve; this is a representation, on the Hardy-Weinberg curve, of the transformed mixture of Beta densities in the unit interval. Panel (c) of Fig. \ref{fig:HW1mix} shows the random dot product graph generated from and i.i.d sample of latent positions on the Hardy-Weinberg curve.
\begin{figure}
	\subfloat[]{\includegraphics[width=0.25\textwidth]{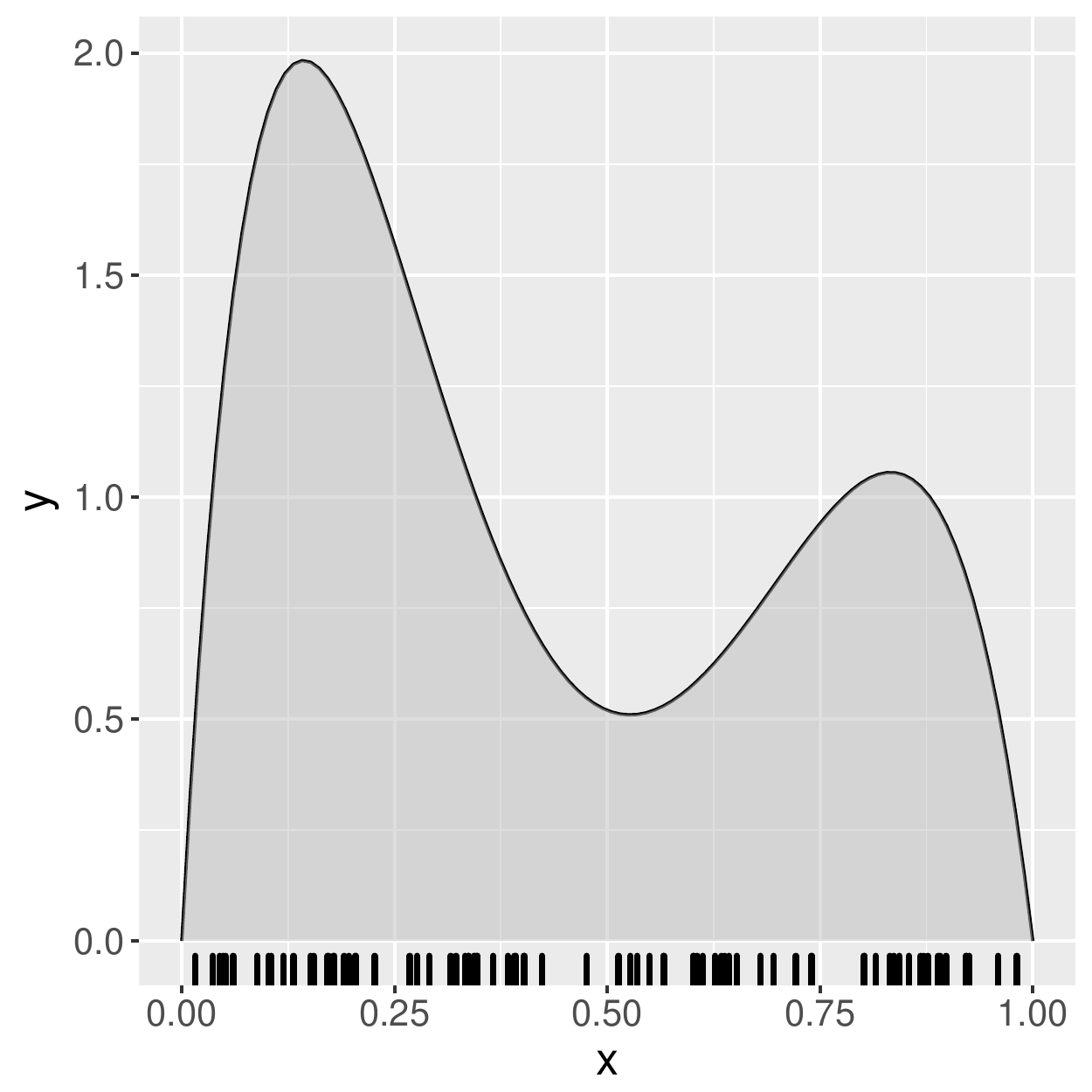}}
\hspace{0.4in}
\subfloat[]{\includegraphics[width=0.25\textwidth]{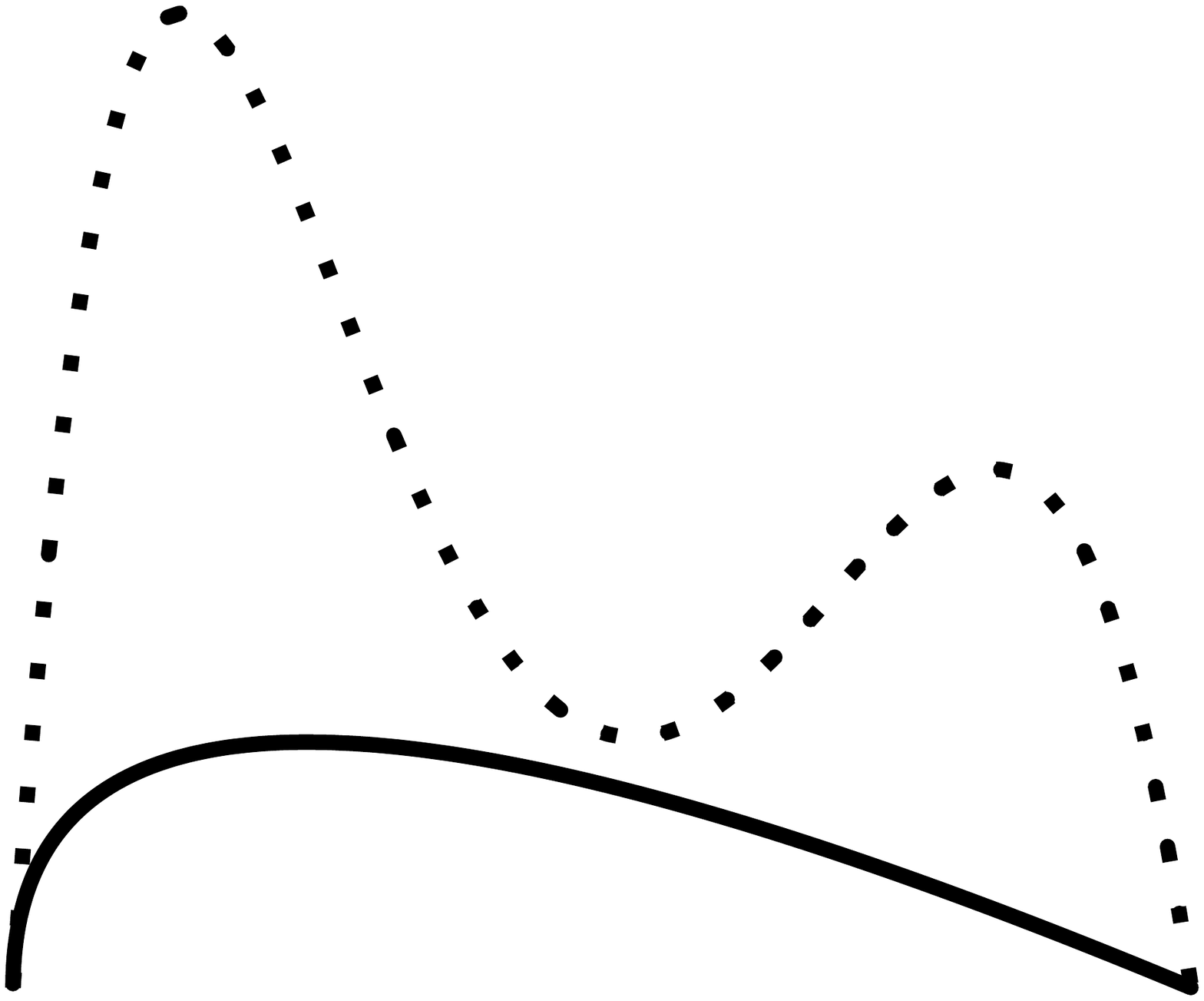}}
\hspace{0.2in}
			\subfloat[]{\includegraphics[width=0.25\textwidth]{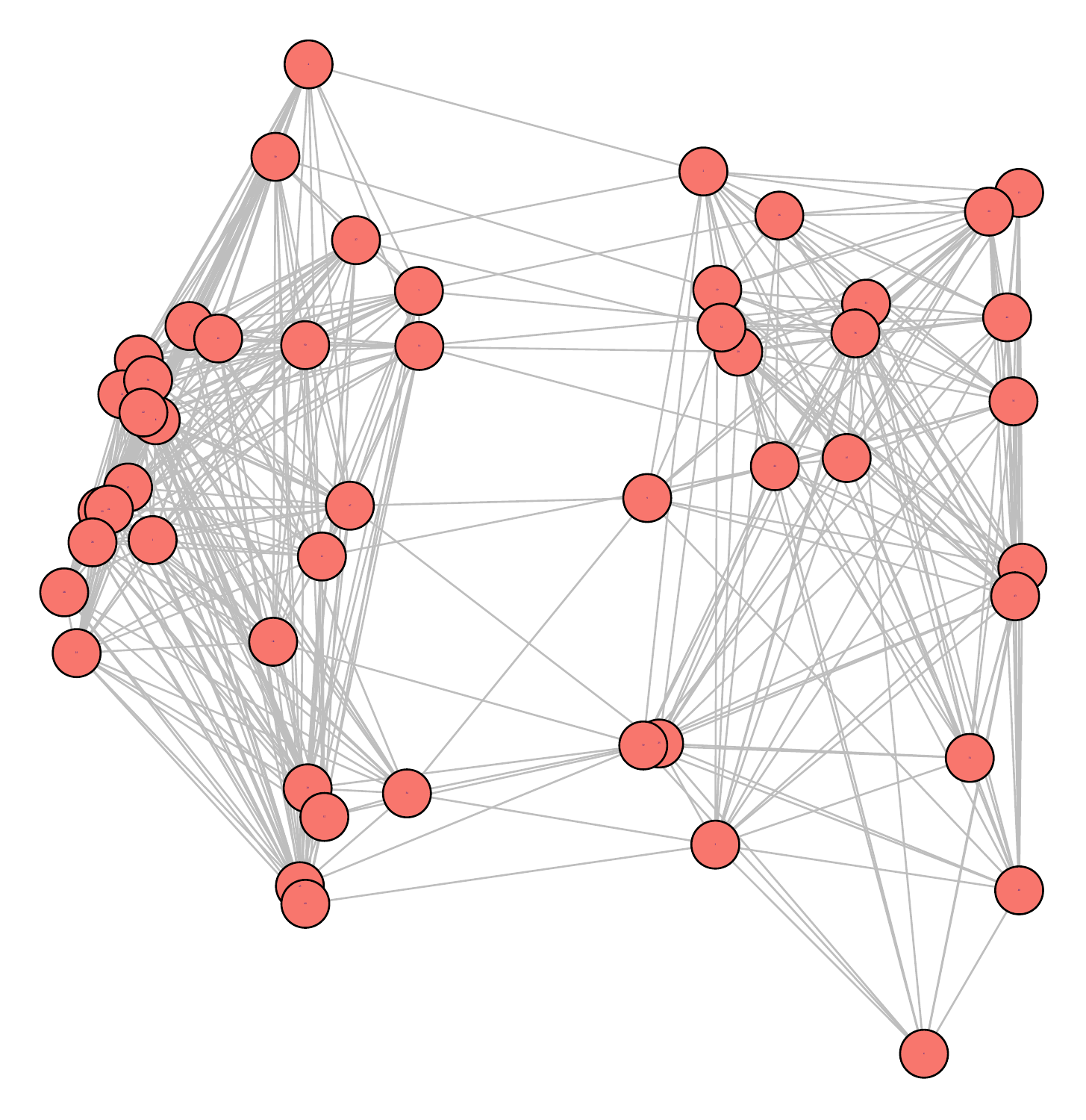}}
			 \caption{(a): {\em Unobserved} density of a Beta mixture distribution on $[0,1]$. (b): {\em Unobserved} latent position distribution on Hardy-Weinberg curve (c): {\em Observed} realization of a random dot product graph generated from latent positions on the Hardy-Weinberg curve.}
	\label{fig:HW1mix}
\end{figure}

\begin{remark}[Stochastic Block Models as LSMs]\label{rem:SBMs_as_LSMs} We emphasize that latent-structure models with one-dimensional structural support can encompass stochastic block models with fixed block probability vectors, because such stochastic block models have a latent position distribution $F$ that is a discrete mixture. More precisely, let the support of $F$ be given by $k$ distinct points $\{x_1, \dots, x_k\}$ in $\mathbb{R}^d$ with weights $\{a_1, \dots, a_k\}$ respectively. Suppose these $k$ distinct points lie on a smooth, non-self intersecting curve $\bC$, with $p:[0,1] \rightarrow \bC$ its arclength parameterization. Let $G$ be a distribution on $[0,1]$ supported on the set of points $p^{-1}(x_i)$ with weights $\{a_1, \dots, a_k\}$. Then $F=G(p^{-1})$. Note that $\bC$ need {\em not} be unique. If the estimation task is that of determining these point masses and their weights, the particular choice of $\bC$ is immaterial. For more on efficient estimation of weights in a stochastic block model, we refer the reader to \cite{tang_sbm_eff}.
\end{remark}

 	\section{Inference on Latent Structure Models: Summary of spectral methods for RDPGs}\label{sec:prior_results_RDPG}
 	Since a latent structure random graph is necessarily a random dot product graph, our program for inference on a latent structure models is to follow an algorithm that leverages the accuracy of spectral embeddings for latent position estimation in  random dot product graphs. First, we embed the adjacency matrix of the LSM into the correct embedding dimension $d$ (the rank of the RDPG) or we embed it into a suitable estimate $\hat{d}$ of this dimension.  This yields a collection of estimates $\hat{X}_i$ of the true latent positions $X_i$.  Second, we consider rotating (due to nonidentifiability) and projecting (due to the noise inherent in these estimates) the $\hat{X}_i$ estimates on to the curve $\bC$. Denote these rotated and projected estimated latent positions by $\breve{X}_i$; by construction, they lie on $\bC$.
 	Third, since the true latent positions $X_i$ are independent and identically distributed with distribution $F_{\theta}$, under suitable regularity conditions, classical maximum likelihood estimation using the $X_i$ points yields efficient estimation of $\theta$ with a variance of order $1/n$. But because the rotated and projected estimated latent positions $\breve{X}_i$ are sufficiently ``close" to the true latent positions, we next treat these estimates as the actual ``data"---that is, we regard the $\breve{X}_i$ as appropriate substitutes for the actual $X_i$ points, even though the latter are i.i.d from the distribution $F$, and the former are decidedly not. Finally, we conduct $M$-estimation of parameters of $F$ using the $\breve{X}_i$, and when considering a one-dimensional latent structure model in which $G=p^{-1}(F)$ belongs to some parametric or nonparametric family in $[0,1]$, we apply classical parametric or nonparametric estimation techniques to
 	$\hat{Y}_i=p^{-1}(\breve{X}_i)$.
 	Note that the $\hat{Y}_i$ points are not independent. They are not un-noisy. They are pullbacks of rotations of projections. Despite these limitations, 
 	under reasonable regularity conditions on a parametric one-dimensional latent structure model, $M$-estimation for the parameters $\theta$ of $G$ using the points $Y_i$ has the same parametric rate of convergence to the true value $\theta_0$ as we might obtain with the pullbacks of the true latent positions $X_i$.

	As we noted earlier, this parametric rate of convergence is somewhat surprising,
and is part of a larger wish list for spectral estimates. Our prior work demonstrates that the spectrally-estimated latent positions are consistent and asymptotically normal, and we prove here that these latent position estimates can generate parametric $M$-estimators that are asymptotically {\em efficient}.

 	We begin by defining the adjacency spectral embedding of a random dot product graph.
 	\begin{definition} [Adjacency spectral embedding (ASE)]\label{def:ASE}
 		Given a positive integer $d \geq 1$, the {\em adjacency spectral embedding} (ASE) of $\mathbf{A}$ into
 		$\mathbb{R}^{d}$ is given by $\hat{{\bf X}}={\bf U}_{\mathbf{A}}
 		{\bf S}_{\mathbf{A}}^{1/2}$ where
 		$$|{\bf A}|=[{\bf U}_{\mathbf{A}}|{\bf U}^{\perp}_{\mathbf{A}}][{\bf
 			S}_{\mathbf{A}} \bigoplus {\bf S}^{\perp}_{\mathbf{A}}][{\bf
 			U}_{\mathbf{A}}|{\bf U}^{\perp}_{\mathbf{A}}]^{\top}$$ is the spectral
 		decomposition of $|\bf{A}| = (\bf{A}^{\top} \bf{A})^{1/2}$ and
 		$\mathbf{S}_{\mathbf{A}}$ is the diagonal matrix of the $d$ largest eigenvalues
 		of $|\mathbf{A}|$ and $\mathbf{U}_{\mathbf{A}}$ is the $n \times d$ matrix whose
 		columns are the corresponding eigenvectors. 
 	\end{definition}
 	
 	We now describe a consistency result in the $2 \to \infty$ norm that provides uniform control of deviations between the estimated and true latent positions \cite{lyzinski15_HSBM}. This uniform control can matter significantly for the subsequent inference task, as we describe below.
 	Furthermore, an analogous result of this type can and has been extended to much more general random matrix perturbations, including covariance matrix estimation \cite{cape2toinfty}. We state our $2 \to \infty$ bound here in a form tailored to the RDPG setting.
 	
 	\begin{theorem}[Theorem 5, \cite{lyzinski15_HSBM}]\label{thm:minh_sparsity}
 		Let $\bA_n \sim \mathrm{RDPG}(\bX_n)$ for $n \geq 1$ be a sequence of random dot product graphs where the $\bX_n$ is assumed to be of rank $d$ for all $n$ sufficiently large. Let $\mathbf{P}_n = \mathbf{X}_n \mathbf{X}_n^{\top}$ and let $\delta_n = \max_{i} \sum_{j} \mathbf{P}_{n,ij}$ be the maximum expected degree. Denote by $\hat{\bX}_n$ the adjacency spectral embedding of $\bA_n$ and let $(\hat{\bX}_{n})_{i}$ and $(\bX_n)_{i}$ be the $i$-th row of $\hat{\bX}_n$ and $\bX_n$, respectively. Let $E_n$ be the event that there
 		exists an orthogonal transformation $\bW_n \in \mathbb{R}^{d \times d}$ such that
 		\begin{equation*}
 		\max_{i} \| (\hat{\bX}_n)_{i} - \bW_n (\bX_n)_{i} \| \leq 
 		\frac{C d^{1/2} \log^2{n}}{\delta_n^{1/2}}
 		\end{equation*}
 		where $C > 0$ is some fixed constant. Then $E_n$ occurs asymptotically almost surely; that is, $\Pr(E_n) \rightarrow 1$ as $n \rightarrow \infty$.
 	\end{theorem}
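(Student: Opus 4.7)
The plan is to follow the now-standard program for $2\to\infty$ eigenvector perturbation bounds in the RDPG setting. Since $\bX_n$ has rank $d$, we may write $\bP_n = \bX_n\bX_n^{\top} = \bU_{\bP_n} \bS_{\bP_n} \bU_{\bP_n}^{\top}$ and (up to an orthogonal transformation inherent to the RDPG nonidentifiability) identify $\bX_n = \bU_{\bP_n}\bS_{\bP_n}^{1/2}$. Since $\hat{\bX}_n = \bU_{\bA_n}\bS_{\bA_n}^{1/2}$, the goal is to produce an orthogonal $\bW_n$ for which every row of $\hat{\bX}_n - \bX_n\bW_n^{\top}$ is small. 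The three main ingredients will be (i) a spectral-norm concentration bound for $\bA_n - \bP_n$, (ii) a Davis--Kahan/Wedin-type subspace perturbation result to produce $\bW_n$, and (iii) a row-wise analysis that exploits the smallness of rows of $\bU_{\bP_n}$ to upgrade Frobenius control into $2\to\infty$ control.

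First I would establish $\|\bA_n - \bP_n\| = O(\sqrt{\delta_n})$ with probability $1 - O(n^{-c})$ via a matrix Bernstein or F\"uredi--Koml\'os type inequality for independent-entry symmetric random matrices; the factor $\sqrt{\delta_n}$ arises because the maximum expected row sum controls the variance proxy. Since the nonzero eigenvalues of $\bP_n$ are of order $\delta_n$ (as $\bP_n$ has rank $d$ and trace comparable to $n$ times the typical row sum), the Davis--Kahan theorem yields an orthogonal $\bW_n \in \mathbb{R}^{d\times d}$ with $\|\bU_{\bA_n} - \bU_{\bP_n}\bW_n\| = O(\sqrt{d}/\sqrt{\delta_n})$, and a comparable bound holds for $\bS_{\bA_n}^{1/2} - \bW_n^{\top}\bS_{\bP_n}^{1/2}\bW_n$ after controlling the eigenvalue perturbation.

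Next I would expand
\begin{equation*}
\hat{\bX}_n - \bX_n\bW_n^{\top} = (\bA_n - \bP_n)\bU_{\bP_n}\bS_{\bP_n}^{-1/2}\bW_n + \bR_n,
\end{equation*}
obtained from the identity $\bU_{\bA_n}\bS_{\bA_n} = \bA_n\bU_{\bA_n}$ together with the substitution $\bA_n = (\bA_n-\bP_n) + \bP_n$ and $\bP_n = \bU_{\bP_n}\bS_{\bP_n}\bU_{\bP_n}^{\top}$. Here $\bR_n$ collects residual terms involving $\bU_{\bA_n} - \bU_{\bP_n}\bW_n$ and the eigenvalue discrepancy. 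The leading term can be handled row-wise: for each fixed $i$, the $i$-th row is $\sum_{j} (\bA_n - \bP_n)_{ij}(\bU_{\bP_n})_j$, a sum of independent bounded random vectors with rows $(\bU_{\bP_n})_j$ of Euclidean norm $O(\sqrt{d/n})$. Applying a Bernstein/Hoeffding inequality and a union bound over $i \in [n]$ gives $\max_i \|[(\bA_n - \bP_n)\bU_{\bP_n}]_i\| = O(\sqrt{d\log n})$ with high probability, and dividing by $\sqrt{\delta_n}$ (the square root of the smallest nonzero eigenvalue of $\bP_n$) produces the target rate.

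The main obstacle will be the residual $\bR_n$: controlling $(\bA_n-\bP_n)(\bU_{\bA_n} - \bU_{\bP_n}\bW_n)$ in the $2\to\infty$ norm is delicate because $\bU_{\bA_n}$ depends on $\bA_n$ and so one cannot directly use independence. The standard remedy is an iterative or bootstrapping argument: use the coarse $2\to\infty$ bound implied by the Frobenius/Davis--Kahan estimate as a seed, substitute it back into the expansion of $\bR_n$, and show that the $\log^2 n$ factor in the final bound absorbs the extra logarithm picked up in this second iteration. The two factors of $\log n$ then arise naturally, one from the Bernstein union bound on the leading term and one from the concentration required to control the dependence in the residual.
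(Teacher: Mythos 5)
This theorem is imported verbatim from \cite{lyzinski15_HSBM} and the paper contains no proof of its own, so there is nothing internal to compare against; your outline --- spectral-norm concentration $\|\bA_n-\bP_n\|=O(\sqrt{\delta_n})$, Davis--Kahan to produce $\bW_n$, the expansion with leading term $(\bA_n-\bP_n)\bU_{\bP_n}\bS_{\bP_n}^{-1/2}$, row-wise Bernstein plus a union bound, and a bootstrap to handle the residual --- is essentially the argument of the cited reference. The one point worth flagging is that your leading-term analysis divides by $\sqrt{\delta_n}$ on the grounds that the nonzero eigenvalues of $\bP_n$ are of order $\delta_n$; this is an eigengap/density assumption that the original theorem imposes explicitly (and which this paper's restatement suppresses), and without it that step, and hence the stated rate, is not justified.
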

 	
 	
 	Having established that the estimated latent positions are consistent in this $2 \to \infty$ norm, we next point out that the latent position estimates are asymptotically normal. Specifically, for a $d$-dimensional random dot product graph with i.i.d latent positions, there exists a sequence of $d \times d$ orthogonal matrices $\bW_n$ such that for any row index $i$, $\sqrt{n}(\bW_n (\Xhat_n)_{i} - (\bX_n)_i)$ converges as $n \rightarrow \infty$ to a mixture of multivariate normals (see \cite{athreya2013limit}).
 	\begin{theorem}[Central Limit Theorem for rows of ASE; Theorem 1, \cite{athreya2013limit}]\label{thm:clt_orig_but_better}
 		Let $(\bA_n, \bX_n) \sim \mathrm{RDPG}(F)$ be a sequence of adjacency matrices and associated latent positions of a $d$-dimensional random dot product graph according to an inner product distribution $F$. Let $\Phi(\bx,\bSigma)$ denote the cdf of a (multivariate)
 		Gaussian with mean zero and covariance matrix $\bSigma$,
 		evaluated at $\bx \in \R^d$. Then
 		there exists a sequence of orthogonal $d$-by-$d$ matrices
 		$( \Wn )_{n=1}^\infty$ such that for all $\bm{z} \in \R^d$ and for any fixed index $i$,
 		$$ \lim_{n \rightarrow \infty}
 		P\left[ n^{1/2} \left( \Xhat_n \Wn - \bX_n \right)_i
 		\le \bm{z} \right] 		= \int_{\supp F} \Phi\left(\bm{z}, \bSigma(\bx) \right) dF(\bx), $$
 		where
 		\begin{equation}
 		\label{def:sigma}\bSigma(\bx) 
 		= \Delta^{-1} \E\left[ (\bx^{\top} X_1 - ( \bx^{\top} X_1)^2 ) X_1 X_1^{\top} \right] \Delta^{-1}; \quad \text{and} \,\, \Delta = \mathbb{E}[X_1 X_1^{\top}] \,\, \text{with $X_1 \sim F$}. 
 		\end{equation}
 	\end{theorem}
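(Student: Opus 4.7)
The plan is to establish the central limit theorem in two conceptually separate stages: first, a Taylor-style linearization that reduces $\sqrt{n}(\Xhat_n \bW_n - \bX_n)_i$ to a linear functional of the centered noise $\bA_n - \bP_n$ plus a negligible residual; second, a conditional multivariate CLT applied to this linear functional, followed by mixing over $X_i \sim F$.

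For the linearization, let $\bP_n = \bX_n \bX_n^\top$. Writing the rank-$d$ eigendecompositions of $\bA_n$ and $\bP_n$ and aligning their leading eigenbases by a Procrustes-type orthogonal matrix $\bW_n$, I would combine a concentration bound $\|\bA_n - \bP_n\| = O_P(\sqrt{n})$ (matrix Bernstein, or the classical \Furedi-\Komlos\ bound) with Davis-Kahan/Wedin sine-theta inequalities exploiting the $\Theta(n)$ spectral gap of $\bP_n$ (which holds since $\bDelta = \bE[X_1 X_1^\top]$ is positive definite because $F$ is a genuinely $d$-dimensional inner product distribution). The output is an expansion
\begin{equation*}
\Xhat_n \bW_n - \bX_n = (\bA_n - \bP_n)\,\bX_n (\bX_n^\top \bX_n)^{-1} + \bR_n,
\end{equation*}
where $\sqrt{n}\,(\bR_n)_i \to 0$ in probability for any fixed row $i$. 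I would lean on the same $2 \to \infty$ perturbation framework that underlies Theorem~\ref{thm:minh_sparsity} to control $\bR_n$ rowwise.

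Next, by the strong law, $n^{-1}\bX_n^\top \bX_n \to \bDelta$ almost surely, so
\begin{equation*}
\sqrt{n}\,(\Xhat_n \bW_n - \bX_n)_i = \bDelta^{-1} \cdot \frac{1}{\sqrt{n}} \sum_{j \neq i} \bigl(A_{ij} - X_i^\top X_j\bigr) X_j + o_P(1).
\end{equation*}
Conditioning on $X_i = x$, the summands $(A_{ij} - x^\top X_j)X_j$ are independent over $j \neq i$ with conditional mean zero and conditional covariance $\bE[(x^\top X_1 - (x^\top X_1)^2) X_1 X_1^\top]$, uniformly bounded in norm. A standard multivariate Lindeberg--Feller CLT (Lyapunov's condition is trivial since everything is bounded) gives conditional convergence to $\mathcal{N}(0, \bSigma(x))$ after conjugation by $\bDelta^{-1}$. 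Unconditioning by integrating against $F$, and using dominated convergence on the conditional characteristic functions (which are uniformly bounded by one), yields the mixture-of-normals limit $\int_{\supp F} \Phi(\bm{z}, \bSigma(x))\,dF(x)$.

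The main obstacle will be the rowwise control in Step~1. A Frobenius-norm bound on $\bR_n$ is straightforward but does not suffice to conclude $\sqrt{n}\,(\bR_n)_i = o_P(1)$ for a single fixed coordinate; what is needed is a $2 \to \infty$ bound, which requires decomposing $\bR_n$ into several pieces and handling each separately---for instance, terms of the form $(\bA_n - \bP_n)\bU_{\bP_n}$ must be controlled rowwise via vector Bernstein inequalities exploiting the row sparsity of $(\bA_n - \bP_n)$, while subspace rotation errors require quantitative Davis-Kahan-type bounds in the appropriate norm. Once the rowwise linearization is secured, the remaining conditional CLT and unconditioning steps are routine.
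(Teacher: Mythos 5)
The paper does not prove this theorem---it is quoted verbatim from \cite{athreya2013limit}---but your outline matches the argument given there essentially step for step: the Procrustes-aligned linearization $\Xhat_n\bW_n-\bX_n=(\bA_n-\bP_n)\bX_n(\bX_n^\top\bX_n)^{-1}+\bR_n$ obtained from spectral concentration plus Davis--Kahan, rowwise $o_P(n^{-1/2})$ control of the residual, a conditional Lindeberg--Feller CLT for the i.i.d.\ summands $(A_{ij}-x^\top X_j)X_j$, and unconditioning over $X_i\sim F$ to produce the normal mixture. Your proposal is correct and takes essentially the same route as the cited proof, including its correct identification of the rowwise residual bound as the main technical burden.
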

 	We recall that when $F$ is a mixture of $K$ point masses, i.e., $F = \sum_{k=1}^{K} \pi_{k} \delta_{\nu_k}, \pi_1, \dots, \pi_K > 0, \sum_{k} \pi_k = 1$, then $(\mathbf{X}, \mathbf{A}) \sim \mathrm{RDPG}(F)$ is a $K$-block stochastic blockmodel graph. Thus, for any fixed index $i$, the event that $\bX_i$ is assigned to block $k \in \{1,2,\dots,K\}$ has non-zero probability and hence one can condition on the block assignment of $\bX_i$ to show that the conditional distribution of $\sqrt{n}(\mathbf{W}_n (\hat{\bX}_n)_{i} - (\bX_n)_i)$ converges to a multivariate normal. More specifically,
 		\begin{equation}
 		P\Bigl\{\sqrt{n}(\mathbf{W}_n \hat{\bX}_n - \mathbf{X}_n)_{i} \leq \bm{z}  \mid \bX_i = \nu_k \Bigr\} \longrightarrow \Phi(\bm{z}, \Sigma_k)
 		\end{equation}
 		where $\Sigma_k = \Sigma(\nu_k)$ is as defined in Eq.~\eqref{def:sigma}. 
 	
Strictly speaking, to prove the efficiency of $M$-estimates of underlying parameters in a one-dimensional latent structure model, the asymptotic normality of the estimated latent positions is not required, but the  classical nature of such a central limit theorem warrants its inclusion here. What {\em is} required to prove such efficiency, though, is the following key empirical process result from \cite{tang14:_nonpar}, below. This empirical process result implies the uniform convergence of scaled sums of differences of functions of estimated and true latent positions, provided the functions belong to a sufficiently regular class.
 	We first recall certain definitions, which we
 	reproduce from \cite{vaart96:_weak}. Let $X_i, 1 \leq i \leq n$ be
 	identically distributed random variables on a measure space
 	$(\mathcal{X}, \mathcal{B})$, and let $P_n$ be their
 	associated {\em empirical measure}; that is, $P_n$ is the
 	discrete random measure defined, for any $E \in \mathcal{B}$, by
 	$$P_n(E)=\frac{1}{n} \sum_{i=1}^n 1_{E}(X_i).$$
 	Let $P$ denote the common distribution of the random variables $X_i$,
 	and suppose that $\mathcal{F}$ is a class of measurable, real-valued
 	functions on $\mathcal{X}$.  The {\em $\mathcal{F}$-indexed empirical
 		process} $\mathbb{G}_n$ is the stochastic process
 	\begin{equation*}
 	f \mapsto \mathbb{G}_n(f)=\sqrt{n}(P_n -P)f=
 	\frac{1}{\sqrt{n}} \sum_{i=1}^n \Bigl(f(X_i)-
 	\mathbb{E}[f(X_i)]\Bigr). 
 	\end{equation*}
 	Under certain conditions, the empirical process $\{\mathbb{G}_n(f): f
 	\in \mathcal{F}\}$ can be viewed as a map into
 	$\ell^{\infty}(\mathcal{F})$, the collection of all uniformly bounded
 	real-valued functionals on $\mathcal{F}$.  In particular, let
 	$\mathcal{F}$ be a class of functions for which the empirical process
 	$\mathbb{G}_n=\sqrt{n}(P_n-P)$ converges to a limiting
 	process $\mathbb{G}$ where $\mathbb{G}$ is a tight Borel-measurable
 	element of $\ell^{\infty}(\mathcal{F})$ (more specifically a Brownian
 	bridge). Then $\mathcal{F}$ is said to be a {\em $P$-Donsker class}.
 	\begin{theorem}[Theorem 4, \cite{tang14:_nonpar}]
 		\label{thm:u-statistics}
 		Let $({\bf X}_n, {\bf A}_n)$ for $n = 1,2,\dots,$ be a sequence of $d$-dimensional
 		$\mathrm{RDPG}(F)$. Let $\mathcal{F}$ be a collection of
 		(at least) twice continuously differentiable functions on $\mathrm{supp} \,F $ with
 		\begin{equation*}
 		\sup_{f \in \mathcal{F}, X \in \mathrm{supp}\,F} \|(\partial f)(X)\| <
 		\infty; \qquad \sup_{f \in \mathcal{F}, X \in \mathrm{supp}\,F}
 		\| (\partial^{2} f)(X) \| < \infty.
 		\end{equation*} 
 		Furthermore, suppose 
 		$\mathcal{F}$ is such that
 		$\mathbb{G}_n=\sqrt{n}(P_n-P)$ converges to $\mathbb{G}$, a
 		$P$-Brownian bridge on $\ell^{\infty}(\mathcal{F})$. Then there exists a sequence of orthogonal matrices $\mathbf{W}_n$ such that as $n \rightarrow \infty$, 
 		\begin{equation}
 		\label{eq:fcorclt1}
 		\sup_{f \in \mathcal{F}} \,\, \Bigl|\frac{1}{\sqrt{n}} \sum_{i=1}^n
 	\Bigl(f(\mathbf{W}_n \hat{X}_i) - f(X_i)\Bigr)\Bigr| \rightarrow 0,
 	\end{equation}
 	where $\{\hat{X}_i\}_{i=1}^{n}$ are the rows of $\hat{\mathbf{X}}_n$. 
 	Therefore, the $\mathcal{F}$-indexed empirical process
 		\begin{equation}
 		f \in \mathcal{F} \mapsto \hat{\mathbb{G}}_{n} f =
 		\frac{1}{\sqrt{n}}\sum_{i=1}^{n} \Bigl(f( \mathbf{W}_n \hat{X}_i) - \mathbb{E}[f(X_i)]\Bigr)
 		\end{equation}
 		also converges to $\mathbb{G}$ on $\ell^{\infty}(\mathcal{F})$. 
 	\end{theorem}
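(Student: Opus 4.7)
\textbf{Proof plan for Theorem \ref{thm:u-statistics}.} The plan is to reduce the claim to a uniform version of the delta method driven by two ingredients: the $2 \to \infty$ consistency bound from Theorem \ref{thm:minh_sparsity}, and a decomposition of $\mathbf{W}_n \hat{X}_i - X_i$ into a leading stochastic term plus negligible remainders. Once the sup--convergence in \eqref{eq:fcorclt1} is established, the convergence of $\hat{\mathbb{G}}_n$ on $\ell^\infty(\mathcal{F})$ follows immediately from writing $\hat{\mathbb{G}}_n f = \mathbb{G}_n f + n^{-1/2}\sum_i (f(\mathbf{W}_n \hat{X}_i) - f(X_i))$ and applying the $P$-Donsker hypothesis to the first summand and \eqref{eq:fcorclt1} to the second.

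For each $f\in\mathcal{F}$ and each $i$, I would start from a second--order Taylor expansion
\begin{equation*}
f(\mathbf{W}_n \hat{X}_i) - f(X_i) = (\partial f)(X_i)^{\top}(\mathbf{W}_n \hat{X}_i - X_i) + \tfrac{1}{2}(\mathbf{W}_n \hat{X}_i - X_i)^{\top}(\partial^2 f)(\xi_i)(\mathbf{W}_n \hat{X}_i - X_i),
\end{equation*}
with $\xi_i$ on the segment between the two points. The quadratic remainder is handled directly: summing over $i$, dividing by $\sqrt{n}$, and using the uniform bound $\sup_{f,x}\|(\partial^2 f)(x)\|<\infty$ together with Theorem \ref{thm:minh_sparsity} (which gives $\max_i \|\mathbf{W}_n \hat{X}_i - X_i\| = O(n^{-1/2}\log^2 n)$ in the dense regime) yields a bound of order $n^{-1/2}\log^4 n$, uniformly in $f$. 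This handles the second--order piece without any use of the Donsker structure.

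The delicate part is the first--order term $n^{-1/2}\sum_i (\partial f)(X_i)^{\top}(\mathbf{W}_n \hat{X}_i - X_i)$. Here I would invoke the standard ``linearization'' expansion for the adjacency spectral embedding,
\begin{equation*}
\mathbf{W}_n \hat{X}_i - X_i = \frac{1}{n}\sum_{j\ne i}(\bA_{ij} - X_i^\top X_j)\,\Delta^{-1} X_j + R_{n,i},
\end{equation*}
where $\Delta = \mathbb{E}[X_1 X_1^\top]$ and $\max_i\|R_{n,i}\|$ is of smaller order than $n^{-1/2}$ up to logarithmic factors (this is the same type of bound that underlies Theorem \ref{thm:clt_orig_but_better}). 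Substituting this into the first--order term produces, up to a negligible remainder, the quantity
\begin{equation*}
T_n(f) \;=\; \frac{1}{n^{3/2}}\sum_{i\ne j}(\bA_{ij} - X_i^\top X_j)\,h_f(X_i,X_j), \qquad h_f(x,y):=(\partial f)(x)^{\top}\Delta^{-1} y.
\end{equation*}
Because $\mathbb{E}[\bA_{ij}\mid X_i,X_j]=X_i^\top X_j$, the pairs $(i,j)$ with $i<j$ yield, conditional on $\bX$, a sum of independent mean--zero Bernoulli centered terms, so a direct variance computation gives $\mathbb{E}[T_n(f)^2] = O(n^{-1})$ with a constant depending only on $\sup\|\partial f\|$ and $\Delta$.

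The main obstacle, and where the hypotheses of the theorem are really used, is promoting pointwise smallness of $T_n(f)$ to \emph{uniform} smallness over the Donsker class $\mathcal{F}$. I would do this by a standard empirical--process/bracketing argument: the map $f\mapsto h_f$ inherits a metric entropy bound from $\mathcal{F}$ via the uniform control on $\partial f$, so the family of quadratic forms in $(\bA-\bP,\bX)$ indexed by $f$ is itself Donsker. Combined with the conditional--on--$\bX$ Bernstein inequality for $T_n(f)$ and a chaining/symmetrization argument, this yields $\sup_{f\in\mathcal{F}}|T_n(f)|\to 0$ in probability, which completes the proof of \eqref{eq:fcorclt1}. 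The final $\ell^{\infty}(\mathcal{F})$--convergence of $\hat{\mathbb{G}}_n$ to $\mathbb{G}$ is then the Slutsky--type consequence described in the opening paragraph.
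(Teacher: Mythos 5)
First, a point of reference: the paper does not prove Theorem \ref{thm:u-statistics} at all --- it is quoted from \cite{tang14:_nonpar} and used as a black box in Section \ref{sec:asymp_eff_LSM} --- so your proposal can only be measured against the proof in that reference. Its architecture does match yours: a second-order Taylor expansion in which the quadratic remainder is killed by the uniform $2\to\infty$ control of Theorem \ref{thm:minh_sparsity} (your $n^{-1/2}\log^4 n$ bound is correct), and the first-order term is reduced, via the standard ASE linearization $\bW_n\hat{X}_i - X_i \approx n^{-1}\sum_{j\neq i}(\bA_{ij}-X_i^{\top}X_j)\Delta^{-1}X_j$, to a conditionally centered bilinear form $T_n(f)$ in the edge noise whose pointwise variance is $O(n^{-1})$. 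That skeleton, and the concluding Slutsky step for $\hat{\mathbb{G}}_n$, are right.

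Two steps need repair as written. (i) For the linearization residual what is needed is $n^{-1/2}\sum_i\|R_{n,i}\|\to 0$. You assert $\max_i\|R_{n,i}\|=o(n^{-1/2})$, which would suffice but is stronger than what the classical lemmas supply; the standard and sufficient route is the Frobenius bound $\bigl(\sum_i\|R_{n,i}\|^2\bigr)^{1/2}=O(n^{-1/2}\log^{c}n)$, which by Cauchy--Schwarz gives $\sum_i\|R_{n,i}\|=O(\log^{c}n)=o(\sqrt{n})$. (ii) The uniformity of $T_n(f)\to 0$ over $\mathcal{F}$ is the genuine crux of the theorem, and your justification for it does not hold up: a uniform bound on $\sup_{f,x}\|(\partial f)(x)\|$ gives an \emph{envelope} for the class $\{h_f\}$, not metric entropy, and the Donsker hypothesis on $\mathcal{F}$ constrains the functions $f$, not their gradients. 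The entropy you need comes instead from the uniform bound on the \emph{second} derivatives together with compactness of $\supp F$: the class $\{\partial f: f\in\mathcal{F}\}$ is then uniformly bounded and uniformly Lipschitz, hence has finite $L^{\infty}$-covering numbers. Moreover, calling the family $\{T_n(f)\}$ ``Donsker'' is not the right frame --- $T_n(f)$ is a degenerate sum over edges, not an empirical process of i.i.d.\ terms --- so the supremum must be controlled by a net-plus-chaining argument using a conditional Bernstein or Hoeffding inequality for each fixed $h_f$ and the Lipschitz dependence of $T_n(f)$ on $h_f$ in sup-norm. With those two corrections the proof of Eq.~\eqref{eq:fcorclt1} closes.
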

 	Theorem~\ref{thm:u-statistics} is in essence a functional central
 	limit theorem for the estimated latent positions $\{\hat{X}_i\}$ in
 	the RDPG setting, and we emphasize that for any $n$, the
 	$\{\hat{X}_i\}_{i=1}^{n}$ are not jointly independent random variables, and therefore
 	Theorem~\ref{thm:u-statistics} is a functional central limit
 	theorem for {\em dependent} data. Due to the non-identifiability of random dot
 	product graphs, there is an explicit dependency on a sequence of
 	orthogonal matrices $\mathbf{W}_n$. The main technical result in Theorem~\ref{thm:u-statistics} is Eq.~\eqref{eq:fcorclt1}, which we use to show the asymptotic normality of $M$-estimation for the parameters of LSMs in Section~\ref{sec:asymp_eff_LSM}.
 	 	\section{Asymptotically efficient $M$-estimation in latent structure models}\label{sec:asymp_eff_LSM}
 	 	Suppose our parameter space is $\bm{\Theta} \subset \mathbb{R}^{l}$ and assume this is a open, nonempty, connected set with compact closure. We denote a particular parameter value by $\theta=(\theta_1, \dots, \theta_j)$. Let $F$ be an inner product distribution with $\supp F \subset B(0, R_0)$ where $B(0,R_0)$ is the ball of radius $R_0>0$ about $0$ in $\bR^d$.  We assume that $F$ represents the cumulative distribution function of a one-dimensional latent structure model with known support $\bC$, with $\bC$ an LSM-regular curve of minimal subspace dimension $d$. Let $p:[0,1]\rightarrow \bC$ be the smooth and smoothly invertible arclength parameterization of $\bC$.  For the underlying distribution of our LSM, let $G_{\theta}: \theta \in \bm{\Theta}$ be a parametric family of cumulative distribution functions supported on the unit interval $[0,1]$, and with density $g(\cdot,\theta)$. Let $\pi:\mathbb{R}^d \rightarrow \bC$ be the distance-minimizing projection of a point in $\mathbb{R}^d$ to $\bC$. 
 	 	
 	 	Since $\bC$ is an LSM-regular curve, there exists a tubular neighborhood $T_{\bC}(R)$ of radius $R>0$ about $\bC$ for which the projection $\pi$ onto $\bC$ is well-defined and sufficiently smooth. Therefore, there exist $R>R_2>R_1>0$ for which we can construct a sufficiently smooth function $f(x, \theta): \mathbb{R}^{d \times l} \rightarrow \mathbb{R}$ satisfying   	 \begin{equation}\label{eq:mollified_log_likelihood}
        f(x, \theta)=\begin{cases}
        \log g(p^{-1}(\pi(x)), \theta) & \textrm{ if }
 	 	x \in T_C(R_1)\\
        0  & \textrm{ if } x \notin T_{\bC}(R_2)
        \end{cases}
        \end{equation}
        Observe that such a function can always be constructed using mollifiers; that is, we can write
        $$f(x,\theta)=\log g(p^{-1}(\pi(X), \theta) \cdot h(x)$$
        where $h(x)$ is a mollifier---a smooth function that is identically equal to 1 on $T_{\mathcal{C}}(R_1)$ and that vanishes outside of $T_{\mathcal{C}}(R_2)$. Now, the first radius, $R_0$, is that of a ball sufficient to encompass the necessarily compact support of $F$. Next, $R>0$ is chosen sufficiently small so that the projection operator $\pi$ onto the closest point in $\mathcal{C}$ is well-defined in the tubular neighborhood $T_{\mathcal{C}}(R)$. We stress that $R$ depends only on $\mathcal{C}$. Finally, $R_1$ and $R_2$, with $0<R_1<R_2<R$, are defined so that a mollification of $\log(p^{-1}(\pi(x))$ can be constructed with the following properties: within the tubular neighborhood $T_{\mathcal{C}}(R_1)$, the mollification $f$ is equal to $\log(g(p^{-1}(\pi(x)), \theta)$. Outside of $T_{\mathcal{C}}(R_2)$, $f$ vanishes. Observe that $f$ is necessarily compactly supported.
 	 	
 	 	Let $f_j(x, \theta): \mathbb{R}^{d \times l} \rightarrow \mathbb{R}$ be defined as follows:
 	 	$$f_j(x, \theta)=\frac{\partial f}{\partial \theta_j} (x, \theta)$$
 	 	Because of how $f$, above, is defined, it is immediate that
 	 	$$f_j(x,\theta)=\frac{\partial \log g(p^{-1}(\pi(x)),\theta)}{\partial \theta_j}$$
 	 	for $x \in T_C(R_1)$, and, as before, $f_j(x,\theta)=0$ for all $\theta \in \bm{\Theta}$ and all $x$ outside of $T_C(R_2)$.
 	 	We require that $f_j$ be twice continuously differentiable with respect to $\theta_j$ for $j \in 1, \dots, l$ and $x_1, \dots x_d$.
 	 	
 	 	Since we are considering maximum likelihood estimates for $\theta$, which are equivalently expressible as minimum contrast estimates (where we minimize the sum of the negations of the log likelihoods), we assume that $\hat{\theta}_n$ is given by 
 	 	$$\hat{\theta}_n=\arg \min \left[ -\frac1n\sum_{i=1}^n f(X_i, \theta)\right]$$
 	 	Suppressing, for notational convenience, the dependence of $\hat{\theta}_n$ on $n$, we assume that $\hat{\theta}$ satisfies
 	 	$$\frac{1}{n} \sum_{i=1}^n \Psi(X_i, \hat{\theta})=0$$
 	 	where 
 	 	$$\Psi(X, \theta)=\left(\frac{\partial \log g(p^{-1}(\pi(X)),\theta)}{\partial \theta_1}, \cdots, \frac{\partial \log g(p^{-1}(\pi(X)),\theta)}{\partial \theta_l}\right)=(f_1(X, \theta), \cdots, f_l(X, \theta))$$
 	 	and $X$ is a random draw from the latent structure distribution $F_{\theta}$.
 	 	We assume the following standard regularity conditions on $f(x, \theta)$ and $f_j(x, \theta)$ (see \cite[p.~328,  p.~384]{Bickel_Doksum}). We reproduce these familiar conditions here to reinforce the fact that, for our main theorem establishing asymptotic efficiency for $M$-estimates of graph parameters using the estimated latent positions $\hat{X}_i$ in place of the true latent positions $X_i$, the standard regularity conditions still suffice. 
 	 	\begin{enumerate}[(a)]
 	 		\item (Uniqueness) The equation
 	 		\begin{equation}
 	 		\label{eq:regularity_unique_A1}
 	 		\int \Psi(x, \theta) dF_{\theta_0}(x)=0
 	 		\end{equation}
 	 		for $\theta \in \Theta$, has a unique solution at $\theta = \theta_0$.
 	 		\item ($L^2$ boundedness on partial derivatives, nonsingularity of the Hessian, and uniform convergence of sample means) If $X \sim F_{\theta_0}=G_{\theta_0}(p^{-1})$, then \begin{equation}\label{eq:L2norm_req}
 	 		\mathbb{E}_{\theta_0}(|\Psi(X, \theta_0)|^2) < \infty;
 	 		\end{equation}
 	 		and for all $\theta$ and $X \sim F_{\theta}$,
the $l \times l$ matrix of second partial derivatives of $f$ denoted by
 	 		 $$D\Psi(X, \theta); \quad (D \Psi(X, \theta))_{jk} = \frac{\partial f_j(X, \theta_0)}{\partial \theta_k}$$
 	 	     satisfies $\|\mathbb{E}_{\theta_0}[D \Psi(X, \theta_0)]\| < \infty$ and that $\mathbb{E}_{\theta_0}[D \Psi(X, \theta_0)]$ is invertible. (In our specific case, where $f$ is the log-likelihood, the negation of this matrix is the familiar Fisher information.)
 	 		
 	 	 Next, if $\epsilon_n$ is a positive sequence of real numbers converging to zero, then
 	 		\begin{equation}
 	 		\label{eq:fisher-info-convergence}
 	 		P_{\theta_0}\left(\sup_{t}\left\{\big|\frac1n \sum_{i=1}^n \left[D\Psi (X_i, t)- D\Psi (X_i, \theta_0)\right]\big|: |t-\theta_0|< \epsilon_n\right\}\right) \rightarrow 0
 	 		\end{equation}
 	 		as $n \rightarrow \infty$
 	 		\item (Sufficient conditions for consistency of a minimum contrast estimate) The function $Q(\theta_0, \theta)$ defined by
 	 		$$Q(\theta_0, \theta)=\mathbb{E}_{\theta_0}\left[-f(X, \theta)\right]$$
 	 		has a unique minimum at $\theta_0$, and 
 	 		\begin{equation}\label{eq:unique_min}
 	 		\inf\{Q(\theta_0, \theta): |\theta-\theta_0|\geq\epsilon\}>Q(\theta_0, \theta_0) \, \forall \epsilon>0
 	 		\end{equation}
 	 		where $\| \cdot \|$ denotes the Euclidean norm in $\mathbb{R}^l$.
 	 		Furthermore, we have the following uniform weak law:
 	 		\begin{equation}\label{eq:consistency_req}
 	 		P_{\theta_0}\left(\sup\left\{\big|\frac1n \sum_{i=1}^n (-f(X_i, \theta))-Q(\theta_0, \theta)\big|: \theta \in \bm{\Theta}\right\}\right) \rightarrow 0
 	 		\end{equation}
 	 		where $P_{\theta_0}$ connotes the probability computed when $\theta=\theta_0$.
 	 	\end{enumerate}
 	 	Let $X_1, \dots, X_n$ be i.i.d $F_{\theta_0}=G_{\theta_0}(p^{-1})$ on $\mathcal{C}$ be our collection of latent positions, organized by rows into the latent position matrix $\bX$.
 	 	Since $X_i$ are i.i.d $F_{\theta_0}=G_{\theta_0}(p^{-1})$, we observe that the maximum likelihood estimate for $\theta_0$, denoted $\hat{\theta}_n$, is, under the above regularity assumptions, well-defined, consistent, and asymptotically normal, with a variance given by the inverse of the Fisher information (again, see \cite{Bickel_Doksum} for a proof of this quintessentially classical result). Namely, suppose we define (suppressing for notational convenience a dependence here on sample size $n$) $\hat{\theta}$ via
 	 	\begin{equation}
 	 	\label{eq:true_MLE}
 	 	\hat{\theta}=
 	 	\arg \, \min \left[-\frac1n \sum_{i=1}^n f(X_i, \theta)\right]
 	 	\end{equation}
 	 	then $\hat{\theta}$ is consistent for $\theta_0$, and furthermore
 	 	\begin{equation}\label{asymp_normal_MLE}
 	 	\sqrt{n}(\hat{\theta}-\theta_0) \rightarrow \mathcal{N}(0, I^{-1}(\theta_0))
 	 	\end{equation}
 	 	where $I(\theta_0)=-E_{\theta_0}[D\Psi(X, \theta_0)]$ is the Fisher information matrix.
 	 	 
 	 	Next, suppose $\bA$ is the adjacency matrix of a random dot product graph with this latent position matrix $\bX$, and let $\hat{\bX}$ be the adjacency spectral embedding of $\bA$.
 	 	Let $\bW_n$ be the orthogonal transformation satisfying
 	 	\begin{equation}\label{eq:Proc_transform}
 	 	\min_{W \in \mathcal{O}(d\times d)}\|\hat{\bX}\bW-\bX\|.
 	 	\end{equation} 
 	 	Let $\hat{\bX}_{r}=\hat{\bX}\bW_n$ denote the properly rotated latent positions.  For convenience, we will employ a slight abuse of notation and use $\hat{\bX}$ to denote this rotated version of our latent positions, so that in what follows below, 
 	 	$\hat{\bX}=\hat{\bX}_{r}$.
 	 	Let $\{\hat{X}_i\}_{i=1}^{n}$ be the rows of $\hat{\bX}$, and suppose that $\tilde{\theta}$ is defined analogously to the maximum likelihood estimate, except with the $\hat{X_i}$ points in place of the true latent positions:
 	 	\begin{equation}
 	 	\label{eq:theta_tilde}
 	\tilde{\theta}=
 	\arg \min \left[-\frac1n \sum_{i=1}^nf(\hat{X}_i, \theta)\right]
 	\end{equation}
 	We emphasize that $\tilde{\theta}$ is an $M$-estimate for $\theta$ determined not by the unobserved true latent positions $X_i$, but rather their estimates $\hat{X}_i$.
 	Note that $\tilde{\theta}$ satisfies
 	$$\frac{1}{n}\sum_{i=1}^n f_j(\hat{X}_i, \tilde{\theta})=0$$
 	 	for all $j$.
 	 	
 	 	Our principal result is that a minimum contrast estimate involving the estimated latent positions possesses the same desirable asymptotic properties as the classical maximum likelihood estimator $\hat{\theta}$ that is a function of the true i.i.d latent positions. In particular, we will show that $$\sqrt{n}(\tilde{\theta}-\theta_0) \rightarrow \mathcal{N}(0, I^{-1}(\theta_0)),$$ which is the content of Theorem \ref{thm:parametric_rate_xhats} below. The proof of this result depends on two pieces: first, a consistency result, which is that $\hat{\theta}-\tilde{\theta}$ converges to zero in probability; and second, an asymptotic normality result under a $\sqrt{n}$ scaling. Under sufficient smoothness conditions for the log likelihoods, consistency of the $M$-estimates follows from the consistency of the adjacency spectral embedding for the true latent positions---that is, from the $2 \to \infty$-norm result of Theorem~\ref{thm:minh_sparsity}. 
 	 The asymptotic normality result, on the other hand, requires a stronger convergence, precisely because we need to show that
 	 $\sqrt{n}(\tilde{\theta}-\theta)$ has a limiting normal distribution.  Thus the asymptotic normality is consequence of our Donsker analogue, Eq.~\eqref{eq:fcorclt1}, which gives a uniform convergence to zero of the scaled sum
 	 $\left[\frac{1}{\sqrt{n}}\sum_{i=1}^n f(\hat{X}_i)-\frac{1}{\sqrt{n}}\sum_{i=1}^n f(X_i)\right]$. To guarantee a parametric rate for our $M$-estimates, it is crucial that this convergence to zero occur even when the scaling is of order $1/\sqrt{n}$, not $1/n$. 
 	 
 	 We begin with the more straightforward consistency result.
 	 	\begin{lemma}\label{lem:hat_theta_to_tilde_theta}
 	 		Let $\hat{\theta}$ and $\tilde{\theta}$ be as defined in Eqs. \eqref{eq:true_MLE} and \eqref{eq:theta_tilde}, above. Let $c>0$ be any positive constant.  Then
 	 		$$P_{\theta_0}(|\hat{\theta}-\tilde{\theta}|>c) \rightarrow 0$$
 	 	\end{lemma}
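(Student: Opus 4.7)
The plan is to show both $\hat\theta$ and $\tilde\theta$ are consistent for $\theta_0$ and then conclude via the triangle inequality. Consistency of $\hat\theta$ follows immediately from classical M-estimation theory: regularity conditions \eqref{eq:unique_min}--\eqref{eq:consistency_req} supply uniform convergence of the empirical criterion and a well-separated minimizer, so $\hat\theta \inprob \theta_0$. The content of the lemma is therefore the analogous consistency for $\tilde\theta$, where the true latent positions $X_i$ have been replaced by their spectral estimates $\hat X_i$.

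For $\tilde\theta$, I would reduce the problem to showing that, in $P_{\theta_0}$-probability,
\[
\sup_{\theta \in \bm{\Theta}} \Bigl|\frac{1}{n}\sum_{i=1}^n \bigl[f(\hat{X}_i,\theta) - f(X_i,\theta)\bigr]\Bigr| \longrightarrow 0.
\]
Combined with Eq.~\eqref{eq:consistency_req}, this yields uniform convergence of $\theta \mapsto \frac{1}{n}\sum_i (-f(\hat X_i,\theta))$ to $Q(\theta_0,\theta)$. Because this limit has a unique, well-separated minimum at $\theta_0$ by Eq.~\eqref{eq:unique_min}, the standard argmin-continuous-mapping argument gives $\tilde\theta \inprob \theta_0$, and the triangle inequality then delivers the lemma.

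The uniform discrepancy bound is the key technical step, and the main tool is the $2\to\infty$ consistency result of Theorem~\ref{thm:minh_sparsity}. After the Procrustes alignment of Eq.~\eqref{eq:Proc_transform} (which can only tighten the bound), we have $\max_i \|\hat X_i - X_i\| \leq C d^{1/2} \log^2 n / \delta_n^{1/2}$ with probability tending to one. The mollification in Eq.~\eqref{eq:mollified_log_likelihood} makes $f(\cdot,\theta)$ smooth and compactly supported in $x$, and jointly smooth in $(x,\theta)$ on the bounded closure of $\bm{\Theta}$; hence $M := \sup_{\theta, x}\|\nabla_x f(x,\theta)\| < \infty$. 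A first-order Taylor expansion then yields
\[
\sup_{\theta\in\bm\Theta}\Bigl|\tfrac{1}{n}\sum_{i=1}^n[f(\hat X_i,\theta)-f(X_i,\theta)]\Bigr| \leq M \cdot \max_i \|\hat X_i - X_i\|,
\]
which tends to zero in probability in the dense regime where $\delta_n = \Theta(n)$.

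The main obstacle is that the rows of $\hat{\bX}$ are not independent---all $\hat X_i$ are entangled spectral functionals of the single random adjacency matrix $\bA$---so classical empirical-process machinery for i.i.d.\ summands cannot be invoked directly. Theorem~\ref{thm:minh_sparsity} is precisely what absorbs this dependence into a single uniform row-wise bound, after which the smoothness of the mollified log-likelihood converts the problem into a routine Lipschitz estimate. With $\hat\theta \inprob \theta_0$ and $\tilde\theta \inprob \theta_0$ in hand, the triangle inequality yields $P_{\theta_0}(|\hat\theta - \tilde\theta| > c) \to 0$ for any fixed $c > 0$.
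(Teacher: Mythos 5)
Your proposal follows essentially the same route as the paper: both establish consistency of $\tilde\theta$ by using the $2\to\infty$ bound of Theorem~\ref{thm:minh_sparsity} to show $\sup_\theta |\frac1n\sum_i f(\hat X_i,\theta)-\frac1n\sum_i f(X_i,\theta)|\to 0$, then invoke Eq.~\eqref{eq:consistency_req} and the well-separated minimum to get $\tilde\theta\inprob\theta_0$, and finish with the triangle inequality against the classical consistency of $\hat\theta$. The only difference is cosmetic --- you obtain the uniform-in-$\theta$ control of $|f(\hat X_i,\theta)-f(X_i,\theta)|$ from a bounded gradient of the mollified $f$, whereas the paper posits a uniform modulus of continuity $\ell(x,y)$ for $\log g$; note only that your claim $\sup_{\theta,x}\|\nabla_x f\|<\infty$ silently requires the same truncation near the endpoints of $[0,1]$ (where, e.g., Beta densities vanish and $\log g$ blows up) that the paper flags explicitly, since the mollifier acts transversally to $\mathcal{C}$ and does not tame singularities occurring on the curve itself.
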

\begin{proof}
	We first impose a certain uniform continuity requirement on the parametric family of distributions $G_{\theta}$. That is, 
	suppose that $\ell(x,y):\mathbb{R}^2 \rightarrow \mathbb{R}$ is such that any $\epsilon>0$, there exists $\delta>0$ for which $\|x-y\|<\delta$ guarantees $|\ell(x,y)|<\epsilon$. 
	Letting $\lambda$ be Lebesgue measure in $\mathbb{R}^1$, we require that for $\lambda$-almost all $x, y \in [0,1]$, all $\theta \in \bm \Theta$, and all $j \in \{1, \dots, l\}$, 
	\begin{equation}\label{eq:Lipschitz_like}
	|\log g(x, \theta)-\log g(y, \theta)| \leq \ell(x,y)
	\end{equation}
	The smoothness of the map $p^{-1}$ onto the curve $\mathcal{C}$ defining the structural support for the latent structure model then ensures that this same uniform continuity property holds for $f(x, \theta)$.  By Theorem~\ref{thm:minh_sparsity}, we note that with probability tending to one as $n \rightarrow \infty$,
	\begin{equation}
	\label{eq:2toinf_consistency_MLE}
	\left|\frac{1}{n}\sum_{i=1}^nf(X_i, \theta)-\frac{1}{n} \sum_{i=1}^n f(\hat{X}_i, \theta)\right|<\ell(\hat{X}_i,X_i) \rightarrow 0
	\end{equation}
	because of the $2 \to \infty$ bound given in Thm. \ref{thm:minh_sparsity}.
	Thus, the sequence of functions
	$$\frac{1}{n} \sum_{i=1}^n f(\hat{X}_i, \theta)-\frac{1}{n} \sum_{i=1}^n f(X_i, \theta)$$
	converges in probability to $0$ {\em uniformly in $\theta$}.
	Because of this, Eq.~\eqref{eq:consistency_req} guarantees that 
	$$\frac{1}{n} \sum_{i=1}^n f(\hat{X}_i, \theta)$$ 
	converges uniformly in probability to $Q(\theta, \theta_0)$ as well.  With an argument exactly analogous to that in \cite[\S~5.2]{Bickel_Doksum}, this implies that $\tilde{\theta}$ converges to $\theta_0$ in probability as well.
\end{proof}
 	 	We remark that many distributions, including the $\textrm{Beta}(a,b)$ family, vanish at the endpoints of $[0,1]$, and hence a truncated version of the log-likelihood for these distributions will satisfy the uniform continuity requirement as long as we restrict ourselves to compact parameter spaces. 
 	 	
 	 		We will use this to show the stronger result that
 	 		$$\sqrt{n}(\hat{\theta}-\tilde{\theta})$$ converges to zero in probability. Once we have proved this stronger result, we can write
 	 		$$\sqrt{n}(\tilde{\theta}-\theta_0)=\sqrt{n}(\tilde{\theta}-\hat{\theta}) + \sqrt{n} (\hat{\theta}-\theta_0)$$
 	 		As we discussed earlier, classical results on maximum likelihood estimation ensure that the latter of these two summands converges to a normal distribution, and we will show that the first summand converges in probability to zero.  Slutsky's Theorem then establishes the asymptotic efficiency of the $M$-estimate obtained with the estimated latent positions $\hat{X}_i$, which is stated next.
 	 	\begin{theorem}\label{thm:parametric_rate_xhats}
 	 		Suppose $X_i \sim$ i.i.d $F_{\theta_0}$ are latent positions of a latent structure model satisfying the regularity assumptions delineated above.  Let $\bA$ be the adjacency matrix of the random dot product with latent positions $\bX$, and let $\hat{\bX}$ be the suitably-rotated adjacency spectral embedding of $\bA$.
 	 		Let $\hat{\theta}$ and $\tilde{\theta}$ satisfy
 	 		$$\hat{\theta}=\arg\min \left[-\frac1n\sum_{i=1}^n f(X_i, \theta))\right], \qquad \, \tilde{\theta}=\arg\min \left[-\frac1n\sum_{i=1}^n f(\hat{X}_i, \theta)\right].$$ Then
 	 	$$\sqrt{n}(\tilde{\theta}-\theta_0) \rightarrow \mathcal{N}(0, I^{-1}(\theta_0)), \textrm{ where } I(\theta_0)_{jk}=-\mathbb{E}_{\theta_0} \left(\frac{\partial^2 f(X,\theta)}{\partial \theta_j \partial \theta_k} \Big \rvert_{\theta = \theta_0} \right)$$
 	 	denotes the Fisher information matrix.
 	 		\end{theorem}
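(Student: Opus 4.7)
The plan is to exploit the decomposition
$$\sqrt{n}(\tilde{\theta}-\theta_0) = \sqrt{n}(\tilde{\theta}-\hat{\theta}) + \sqrt{n}(\hat{\theta}-\theta_0),$$
where the second summand already converges in law to $\mathcal{N}(0, I^{-1}(\theta_0))$ by the classical MLE central limit theorem recorded in~\eqref{asymp_normal_MLE}. By Slutsky's theorem it then suffices to prove the stronger statement $\sqrt{n}(\tilde{\theta}-\hat{\theta}) \to 0$ in probability, which is the reduction flagged in the discussion immediately preceding the theorem.

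To produce this $\sqrt{n}$-rate, I would perform a multivariate Taylor expansion of the estimating equation that $\tilde{\theta}$ satisfies. Using $\frac{1}{n}\sum_i \Psi(\hat{X}_i, \tilde{\theta}) = 0$ and expanding around $\hat{\theta}$ yields an intermediate $\theta_n^{\star}$ on the segment between $\hat{\theta}$ and $\tilde{\theta}$ with
$$0 = \frac{1}{n}\sum_{i=1}^n \Psi(\hat{X}_i, \hat{\theta}) + \left[\frac{1}{n}\sum_{i=1}^n D\Psi(\hat{X}_i, \theta_n^{\star})\right](\tilde{\theta}-\hat{\theta}).$$
Because $\frac{1}{n}\sum_i \Psi(X_i, \hat{\theta}) = 0$ by the definition of $\hat{\theta}$, the leading term equals $\frac{1}{n}\sum_i[\Psi(\hat{X}_i,\hat{\theta}) - \Psi(X_i,\hat{\theta})]$, and rearranging gives the key identity
$$\sqrt{n}(\tilde{\theta}-\hat{\theta}) = -\left[\frac{1}{n}\sum_{i=1}^n D\Psi(\hat{X}_i, \theta_n^{\star})\right]^{-1} \cdot \frac{1}{\sqrt{n}}\sum_{i=1}^n\bigl[\Psi(\hat{X}_i, \hat{\theta}) - \Psi(X_i, \hat{\theta})\bigr].$$

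The two factors are then treated separately. For the Hessian factor, Lemma~\ref{lem:hat_theta_to_tilde_theta} together with the consistency of $\hat{\theta}$ gives $\tilde{\theta}, \hat{\theta} \to \theta_0$ in probability, so $\theta_n^{\star} \to \theta_0$ as well; the $2\to\infty$ bound in Theorem~\ref{thm:minh_sparsity} and smoothness of $D\Psi$ allow me to replace each $\hat{X}_i$ by $X_i$ at negligible cost, and condition~\eqref{eq:fisher-info-convergence} together with assumption (b) drives $\frac{1}{n}\sum_i D\Psi(\hat{X}_i, \theta_n^{\star}) \to \mathbb{E}_{\theta_0}[D\Psi(X,\theta_0)] = -I(\theta_0)$, which is invertible. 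For the $\sqrt{n}$-scaled empirical difference I would invoke Theorem~\ref{thm:u-statistics} componentwise on the function class $\mathcal{F}_j = \{f_j(\cdot, \theta) : \theta \in \bm{\Theta}\}$ for $j=1,\dots,l$. Its conclusion~\eqref{eq:fcorclt1} yields
$$\sup_{\theta \in \bm{\Theta}}\left|\frac{1}{\sqrt{n}}\sum_{i=1}^n\bigl[f_j(\hat{X}_i,\theta) - f_j(X_i,\theta)\bigr]\right| \to 0,$$
and evaluating the supremum at the random $\hat{\theta}$ gives $\frac{1}{\sqrt{n}}\sum_i[\Psi(\hat{X}_i,\hat{\theta}) - \Psi(X_i,\hat{\theta})] \to 0$ in probability. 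Combining, $\sqrt{n}(\tilde{\theta}-\hat{\theta}) = o_P(1)$, and Slutsky's theorem closes the argument.

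The main obstacle I expect is verifying that each $\mathcal{F}_j$ fulfils the Donsker hypothesis of Theorem~\ref{thm:u-statistics}, namely that $\sqrt{n}(P_n-P)$ converges to a $P$-Brownian bridge on $\ell^{\infty}(\mathcal{F}_j)$. The derivative bounds required by Theorem~\ref{thm:u-statistics} follow easily from the mollified construction in~\eqref{eq:mollified_log_likelihood} together with compactness of the closure of $\bm{\Theta}$, but the Donsker property itself must be established via a bracketing or covering-entropy estimate for the parametric family, typically by exploiting uniform Lipschitzness of $\theta \mapsto f_j(x,\theta)$ together with compactness of the closure of $\bm{\Theta}$. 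Once that entropy bound is in hand, the remainder is a routine assemblage of the Taylor expansion, Slutsky's theorem, and the classical MLE limit theorem.
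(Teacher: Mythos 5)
Your proposal is correct and follows essentially the same route as the paper's proof: reduce via Slutsky to showing $\sqrt{n}(\tilde{\theta}-\hat{\theta})\to 0$ in probability, Taylor-expand the score equations, control the $\sqrt{n}$-scaled difference $\tfrac{1}{\sqrt{n}}\sum_i\bigl[f_j(\hat{X}_i,\theta)-f_j(X_i,\theta)\bigr]$ uniformly in $\theta$ using Eq.~\eqref{eq:fcorclt1}, and let the empirical Hessian converge to $-I(\theta_0)$ via condition (b). The only cosmetic differences are that you expand the $\hat{X}_i$-score around $\hat{\theta}$ and solve the linearized equation directly, whereas the paper expands $h_j(\theta)=\tfrac1n\sum_i f_j(X_i,\theta)$, evaluates the Donsker bound at $\tilde{\theta}$ rather than $\hat{\theta}$, and closes with a proof by contradiction; you also rightly flag that the Donsker hypothesis for $\{f_j(\cdot,\theta):\theta\in\bm{\Theta}\}$ needs verification, which the paper asserts from compact support and smoothness with a citation to \cite{vaart96:_weak}.
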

 	 	\begin{proof}
 	 		Observe that $\sqrt{n}(\hat{\theta}-\theta_0)$ converges to a normal distribution with mean zero and variance $I^{-1}(\theta_0)$ (see, for example, \cite[6.2.2]{Bickel_Doksum}).  Thus, it remains to show that
 	 		$$\sqrt{n}(\hat{\theta}-\tilde{\theta}) \rightarrow 0$$
 	 		in probability.  To this end, first note that for every $j \in 1 \dots l$,  $f_j(\cdot, \theta)$ is a compactly supported, twice-continuously differentiable function on $\mathbb{R}^d$. Letting 
 	 		$\mathcal{F}=\{f_j(\cdot, \theta): \theta \in \bm{\Theta}\},$
 	 	 we find that this collection of functions is a Donsker class \cite{vaart96:_weak}. As such, Eq.~\eqref{eq:fcorclt1} guarantees that 
 	 	$$\sup_{\theta \in \bm{\Theta}} \Big \lvert \frac{1}{\sqrt{n}}\sum_{i=1}^n\bigl(f_j(\hat{X}_i, \theta)-f_j(X_i, \theta)\bigr)\Big \rvert \rightarrow 0$$
 	 	(Note that because we assume that the matrix of estimated latent position has been appropriately rotated, we can suppress here the sequence of orthogonal transformations that are part of Theorem~\ref{thm:u-statistics}.)
 	 	Therefore, we have that
 	 	$$ \Big \lvert \frac{1}{\sqrt{n}}\sum_{i=1}^n \bigl(f_j(\hat{X}_i, \tilde{\theta})-f_j(X_i, \tilde{\theta} \bigr) \Big \rvert \rightarrow 0$$
 	 	Note that  
 	 	$\frac{1}{\sqrt{n}}\sum_{i=1}^n f_j(\hat{X}_i, \tilde{\theta})=0$, and hence 
 	 	$\frac{1}{\sqrt{n}} \sum_{i=1}^n f_j(X_i, \tilde{\theta})$
 	 	can be made arbitrarily small, with probability close to 1, for $n$ large.
 	 	Furthermore, by definition,
 	 	$\frac{1}{n} \sum_{i=1}^n f_j(X_i, \hat{\theta})=0$.
 	 	Suppose, then, that there exists a positive constant $c_1$ such that for $n$ sufficiently large  $$P_{\theta_0}(|\tilde{\theta}-\hat{\theta}|>c_1/\sqrt{n}) \geq \alpha>0$$
 	 	Expanding the function
 	 	$h_j(\theta)=\frac{1}{n}\sum_{i=1}^n f_j(X_i, \theta)$ in a second-order Taylor expansion around $\hat{\theta}$, we find
 	 	$$h_j(\tilde{\theta})=h_j(\hat{\theta})+ \nabla h_j^{\top}(\hat{\theta})(\tilde{\theta}-\hat{\theta}) + [\tilde{\theta}-\hat{\theta}]^{\top} H(\theta^*) [\tilde{\theta}-\hat{\theta}]$$
 	 	where $H$ is the Hessian matrix of $h_j$ evaluated at some point $\theta^*$ on the line segment between $\hat{\theta}$ and $\tilde{\theta}$. We assume that the Hessian is bounded in spectral norm. From the above equality, we conclude
 	 	$$\sqrt{n} h_j(\tilde{\theta})=0+ \nabla h_j^{\top}(\hat{\theta}) \sqrt{n}(\tilde{\theta}-\hat{\theta})+ \sqrt{n}(\tilde{\theta}-\hat{\theta})^{\top} H(\theta^*) (\tilde{\theta}-\hat{\theta})$$
 	 	Put
 	 	$$v_j(n)=\sqrt{n}(\tilde{\theta}-\hat{\theta})^{\top} H(\theta^*) (\tilde{\theta}-\hat{\theta})$$
 	 	Because of the boundedness of the Hessian and the fact that in probability, $\tilde{\theta}-\hat{\theta} \rightarrow 0$,
 	 	we have
 	 	$$\frac{|v_j(n)|}{||\sqrt{n}(\tilde{\theta}-\hat{\theta})||} \rightarrow 0$$ 
 	 	in probability, so that the error term $v_j$ is of smaller order than the norm of $\sqrt{n}(\tilde{\theta}-\hat{\theta})$.
 	 	Now, consider the vectors 
 	 	 $$\bh=[h_1(\tilde{\theta}), \dots,  h_l(\tilde{\theta})]^{\top} \textrm{ and } \bv=[v_1(n), \dots, v_l(n)]^{\top}$$
 	 	Observe that if we define $\bS$ by $\bS_{bc}(\bX,\theta)=-\frac{1}{n}\sum_{i=1}^n \frac{\partial^2 f(X_i, \theta)}{\partial \theta_b \partial\theta_c}$, for $b,c \in 1, 2, \dots, l$, then
 	 	\begin{equation}\label{eq:penultimate_contradiction}
 	 	\bS^{-1}(\hat{\theta})\sqrt{n} \bh(\tilde{\theta})-\bS^{-1}(\hat{\theta}) \bv=\sqrt{n}(\tilde{\theta}-\hat{\theta})
 	 	\end{equation}
 	 By our functional central limit theorem, the first component on the left hand side of \eqref{eq:penultimate_contradiction}, namely $\bS^{-1}(\hat{\theta})\sqrt{n} \bh(\tilde{\theta})$, goes to zero. The norm of the second component on the left hand side of \eqref{eq:penultimate_contradiction}, namely $\bS^{-1}(\hat{\theta}) \bv$, is of asymptotically smaller order than $\sqrt{n}(\|\tilde{\theta}-\hat{\theta}\|)$.  Hence if, for $n$ sufficiently large,  $$P_{\theta_0}(\sqrt{n}\|(\tilde{\theta}-\hat{\theta}\|)>c) \geq \alpha>0$$
 	 	we obtain a contradiction. Therefore,
 	 	$\sqrt{n}(\tilde{\theta}-\hat{\theta})$ converges to zero in probability. Observe that by Eq.~\eqref{eq:fisher-info-convergence} and the consistency of the maximum likelihood/minimum contrast estimate, we find that $\bS^{-1}(\hat{\theta}) \rightarrow I(\theta_0)$. The result now follows from Slutsky's Theorem. 
 	 	\end{proof}
 	 	
\begin{remark}
We note that the definition of LSM includes the sparsity parameter $\rho_n$. If we let $\rho_n \rightarrow 0$, so that the graph densities decrease as $n$ increases, then Theorem 1 and Theorem 2 need to be adjusted accordingly. For example, if $\rho_n \rightarrow 0$, then the $\sqrt{n}$ scaling in Theorem 2 is replaced by a scaling of $\sqrt{n \rho_n}$; that is, the estimation accuracy of the $\hat{\mathbf{X}}$ decreases as $\rho_n$ decreases. Theorem 4 then needs to be restated, in that the efficiency of the $\{\hat{X}_i\}$ is identical to that of a smaller sub-sample of the $\{X_i\}$; more specifically, we sub-sample $\rho_n^{1/2}$ of the $X_i$ and use them to estimate the parameters $\theta$. This is unavoidable, because sparser graphs contain less signal.
\end{remark}

 	\section{Examples of efficient estimation and testing for latent structure models}\label{sec:Examples}
 	
 	To illustrate the results numerically, we first consider the parametric latent structure model with known support, constructed as follows.  Let $G$ be the cumulative distribution function of the $\textrm{Beta}(a,b)$ distribution, and let $r:[0,1]\rightarrow \bC$ be the map $r(t)=(t^2, 2t(1-t), (1-t)^2)$. Then $\mathcal{C}=\textrm{Im}(r)$ describes the Hardy-Weinberg ($H-W$) curve in the simplex. Let $p$ be the arclength parametrization of $\mathcal{C}$.
 	
 	Consider a random dot product graph with latent position matrix $\bX$ whose rows are i.i.d draws from $F=G(p^{-1})$ along the Hardy-Weinberg curve.  Let $\bA$ be the adjacency matrix of this graph, and let $\hat{\bX}_i$ be the $i$th row of the corresponding adjacency spectral embedding, {\em suitably rotated}. Recall that an appropriate rotation is necessary because of the inherent nonidentifiability in our model. We note that in the simulations we discuss below, we generate the true latent positions first, and as such are able to determine the particular orthogonal transformation that optimally aligns the estimated latent position with the true latent positions. When processing real data, of course, this rotation is unknown. Manifold learning can still be used for the estimation of a rotation of the curve. In two-sample testing, this orthogonal nonidentifiability can be addressed by by determining an optimal Procrustes fit between pairs of point clouds of estimated positions.
    
    For notational simplicity, we continue to refer to $\hat{\bX}$ as the matrix of {\em suitably rotated} latent positions.  Consider Figure \ref{fig:HW2_beta_12}, which illustrates the components of a latent structure model with known structural support and also depicts our methodology for parametric estimation in this context. In panel (a), we see the density $G_{\theta}=\textrm{Beta}(a=1,b=2)$ on the unit interval, and a subsample of points $t_i$, depicted as a rug plot, chosen from this density; this is the underlying distribution for our latent structure model. We do {\em not} observe this distribution. Because we are in a parametric latent structure model, we assume the underlying distribution belongs to a parametric family (in this case the Beta family), but we do not assume knowledge of any or all of the relevant parameters. In panel (b), we see the images of these points, $p(t_i)$, along the Hardy-Weinberg curve; these are the true latent positions that generate our random graph. Once again, we do {\em not} observe these points. In panel (c), we see the random dot product graph generated from these true latent positions. This is the network we actually {\em do} observe. In the panel (d), we see the adjacency matrix for this network. It does not seem obvious that an observation of the network or its adjacency matrix would allow us to accurately estimate the latent positions or the underlying Beta distribution. And yet, in panel (e), we see the estimated latent positions given by the rows of the adjacency spectral embedding for the random graph with the previously-specified true latent positions; these follow the true latent positions in panel (b). Last, in panel (f), we show the $\textrm{Beta}(\tilde{a}, \tilde{b})$ density that arises when computing the $M$-estimates for the parameters $a,b$ based on the estimated latent positions. This final panel shows a striking similarity to panel (a), the true underlying distribution.

\begin{figure}[htp]
 		\centering
 		\subfloat[][{\em Unobserved} Beta (1,2) density and rug plot on unit interval]
        {\includegraphics[width=0.28\textwidth]{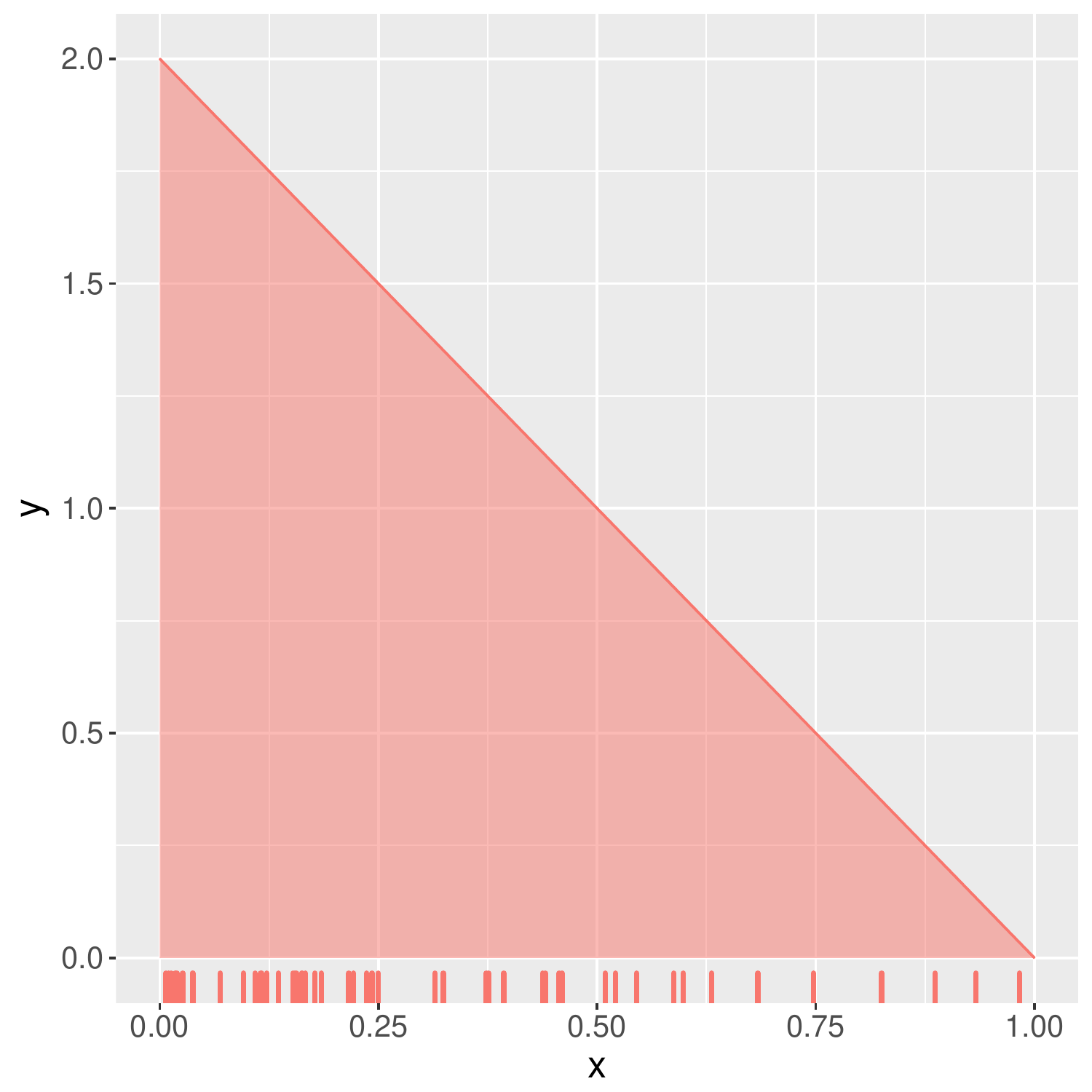}}
        \hfil
 		\subfloat[][{\em Unobserved} latent positions for LSM: images of Beta-distributed points on H-W curve]{\includegraphics[width=0.34\textwidth]{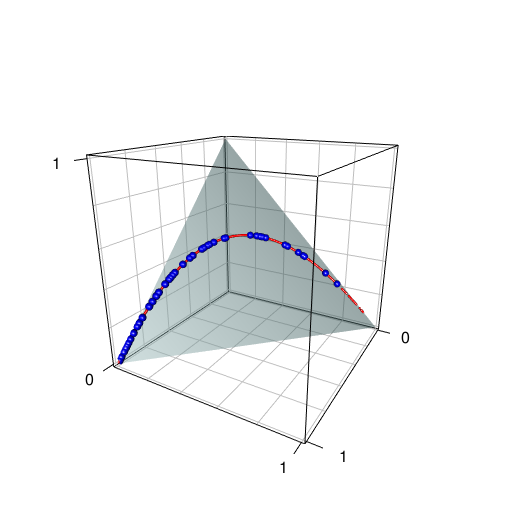}}\\
 		\subfloat[][{\em Observed} random dot product graph generated from these latent positions]
        {\includegraphics[width=0.27\textwidth]{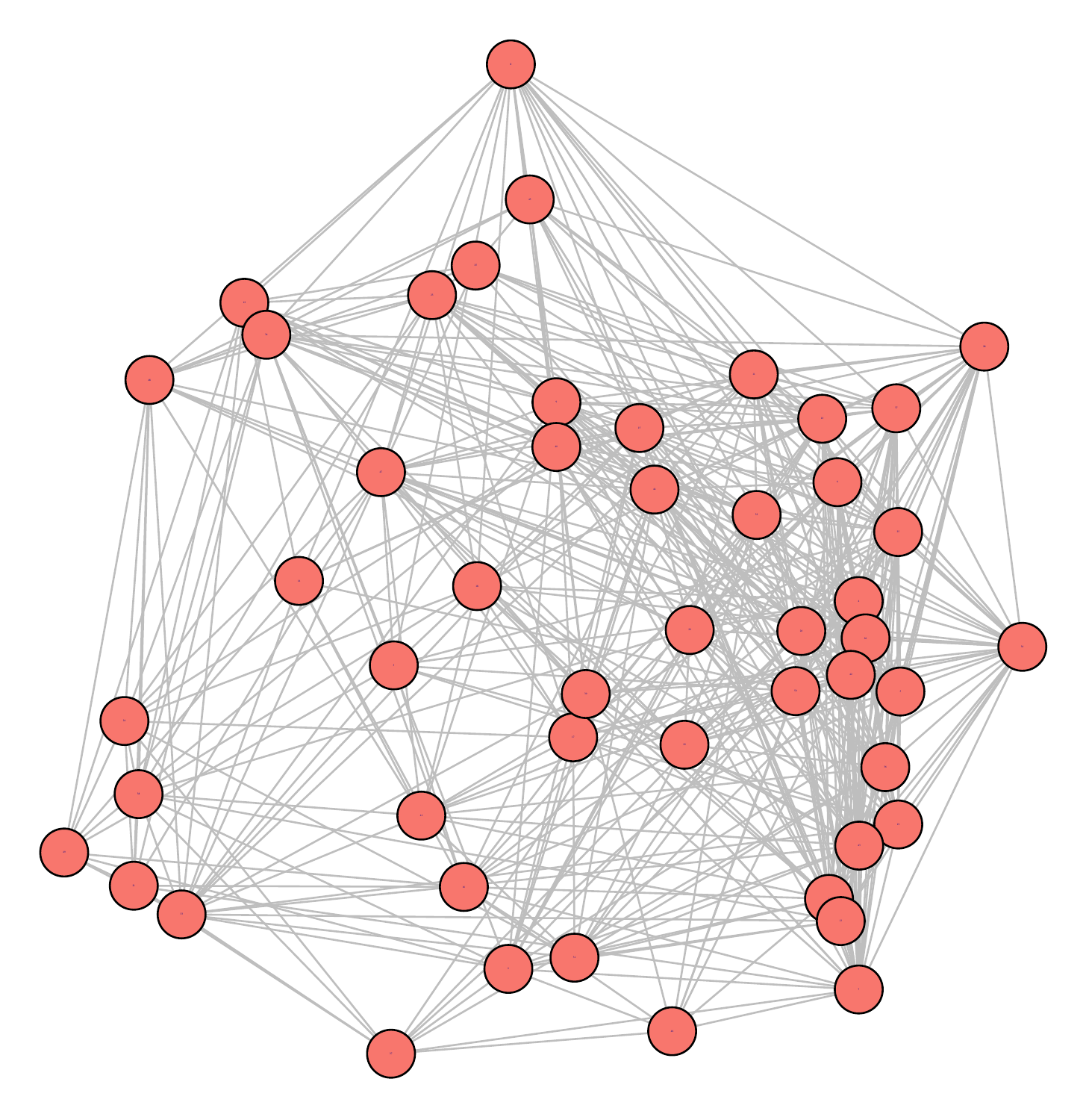}}
        \hfil
 		\subfloat[][{\em Observed} adjacency matrix of the random dot product graph]{\includegraphics[width=0.27\textwidth]{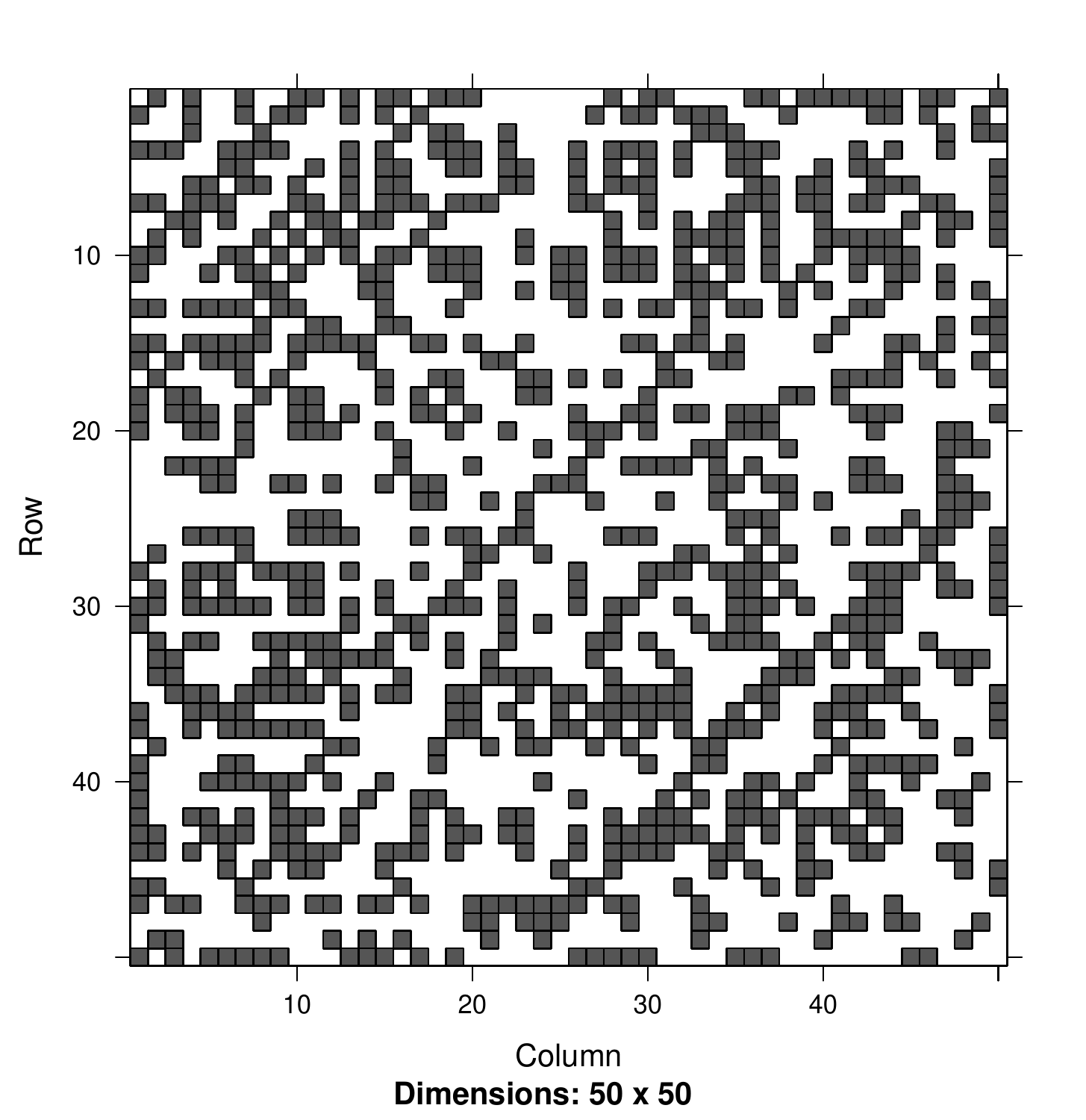}}\\
 		\subfloat[][{\em Estimated} latent positions around H-W curve]{\includegraphics[width=0.34\textwidth]{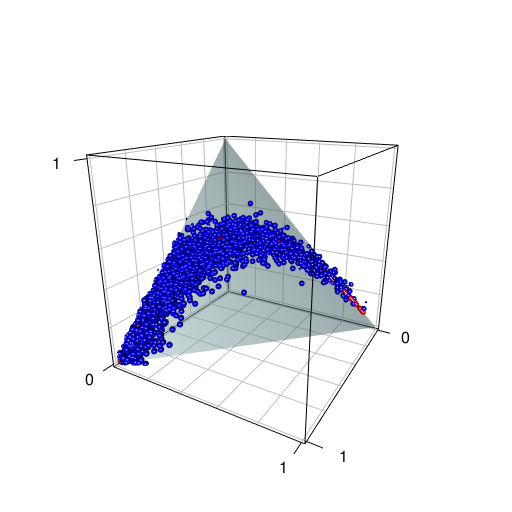}}
        \hfil
        \hfil
 		\subfloat[][{\em M-Estimated} Beta density using estimated latent positions]{\includegraphics[width=0.28\textwidth]{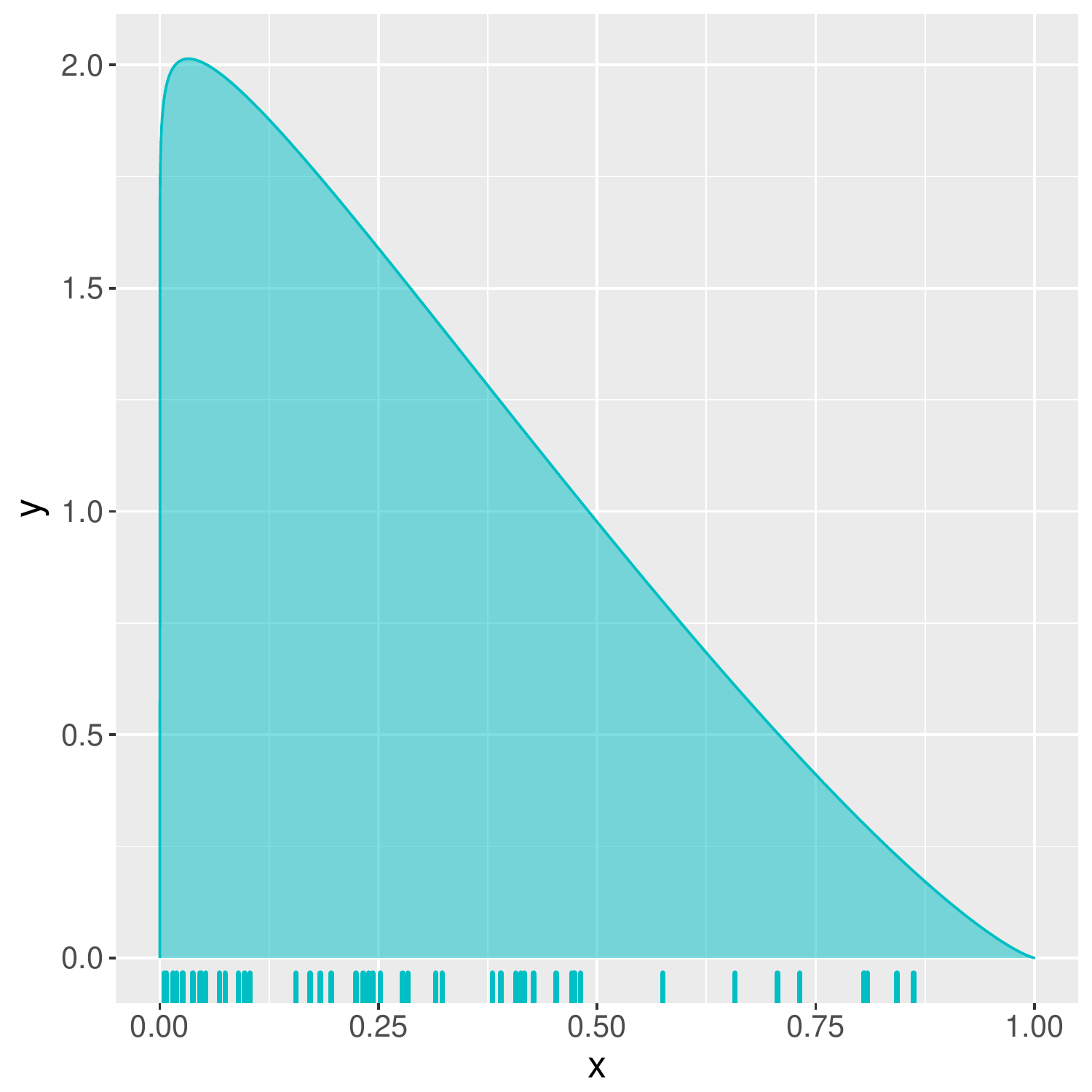}}
 		\caption{(a) Beta $(1,2)$ density (unobserved); (b) latent positions on H-W curve (unobserved); (c) RDPG with these latent positions (observed); (d) adjacency matrix of RDPG (observed); (e) estimated latent positions around H-W curve; (f) $M$-estimated Beta density. See also Table \ref{table:first_MSE}.}
 		\label{fig:HW2_beta_12}
\end{figure}

	 With our Beta parameters $\theta=(a=1,b=2)$, let $\hat{\theta}=(\hat{a},\hat{b})$ be the estimates satisfying
 	\eqref{eq:true_MLE}; that is, the maximum likelihood estimates based on the true latent positions, and let $\tilde{\theta} =(\tilde{a}, \tilde{b})$ be the $M$-estimates satisfying \eqref{eq:theta_tilde}; that is, the quasi-maximum likelihood estimates based on the estimated latent positions.  Table \ref{table:first_MSE} shows the mean-squared error (MSE) for each of these estimates at sample size $n=8000$, demonstrating that these estimates yield comparable mean-squared error for $n=8000$ (see Table \ref{table:8000} for the MSE for other parameter values).
 	\vspace{0.1in}
 		\begin{table}[htp]
 			\caption{Mean-squared error of Beta $a=1, b=2$ parameters\\ in an H-W LSM using true and estimated latent positions\\Sample size $n=8000$}
            \label{table:first_MSE}
 	\begin{tabular}{|c|c|c|}	
 	 \hline
 	 & a & b \\ \hline
 	 & & \\ 
 	{\bf MSE($\hat{\theta}$)} & 0.00014 & 0.00097\\
 	& & \\
 	\hline
 	& & \\
 	{\bf MSE($\tilde{\theta}$)} & 0.00015 & 0.0012\\
 	& & \\
 	\hline
 	\end{tabular}
 	\end{table}
 	
 	This simulation renders plausible our central claim that in a latent structure model, $M$-estimation using the estimated latent positions compares favorably to $M$-estimation using the true latent positions. 
 	
 	We next consider the case when the support is unknown, but parametric.  As before, let $G$ be the cumulative distribution function of the $\textrm{Beta}(a,b)$ distribution on the unit interval, and $t_i \in [0,1]$ a collection of independent, identically $G$-distributed random variables. Let $\mathcal{C}$ be a curve with minimal subspace dimension $d$.  Suppose that $\mathcal{C}$ is the image of a map $
    q:[0,1] \rightarrow \mathbb{R}^d$ where each component $q_k(\cdot)$ of $q$ is a polynomial of some fixed degree (for example, quadratic). Once again, let $p$ represent the arclength parametrization of $\textrm{Im}(q)$. Consider a latent structure random graph with adjacency matrix $\bA$ whose latent positions are given by $X_i \in \mathcal{C}$, where, as before, $X_i$ are i.i.d $F=G(p^{-1})$.  In this case, we have two separate estimation problems before us: an estimation of the parameters defining each quadratic polynomial $q_k(t)$---or, equivalently, an estimation of the curve $\mathcal{C}$; and second, an estimation for the parameters $a, b$.
 	
 	Considered individually, neither of these is insurmountable: if we have enough i.i.d data centered along a polynomial curve, we can estimate the curve. Similarly, given enough i.i.d draws of points in the interval, we can estimate the parameters of our $\textrm{Beta}$ distribution. But in our setting, we have only {\em non-i.i.d data} around an {\em unknown} curve. Thus even if we could reasonably use the estimated latent positions to recover the structure of the support of our distribution $F$, there remains the recovery of parameters in a wholly different space.
 	
 	The efficiency of $M$-estimation composed with the adjacency spectral embedding (in particular, as discussed in Sec. \ref{sec:asymp_eff_LSM}, the consistency result of  Theorem~\ref{thm:minh_sparsity} and the uniform convergence result of Theorem~\ref{thm:u-statistics}) allows us to connect these two inference procedures.  In the figures below, we once again consider simulated data along the Hardy-Weinberg curve.  Instead of assuming knowledge of the precise map $p$ defining this curve, however, we assume only that it is quadratic, and we attempt to learn the parameters of this quadratic curve from the estimated latent positions---that is, from the adjacency spectral embedding of the latent structure random graph.

 	In particular, consider Figure \ref{fig:HW_alpha_beta_many}. Each panel in Fig. \ref{fig:HW_alpha_beta_many} shows a two-dimensional projection (on to the first two coordinates) of estimated latent positions drawn from the Hardy-Weinberg curve.  That is, we first simulate 8000 points from a Beta distribution with various parameters: $(a=1, b=1)$; $(a=1, b=2)$; $(a=2, b=5)$; and $(a=5, b=5)$. We consider the images of these points under $p:[0,1] \rightarrow \mathcal{C}$, where $p$ is the arclength parametrization of $\mathcal{C}$, the Hardy-Weinberg curve.  These are the true latent positions $X_i$. We generate a random dot product graph $\bA$ with these latent positions, and then spectrally embed $\bA$ into $d=3$ dimensions.  Fig. \ref{fig:HW_alpha_beta_many} shows scatter plots of the first two coordinates of the estimated latent positions; these are the blue dots around the black Hardy-Weinberg curve $\mathcal{C}$ on which the true latent positions lie. We use these estimated latent positions to obtain a best-fitting quadratic Bezier curve $\hat{\bC}$ \cite{gallier,prautzsch} through these positions, shown in red. The quadratic restriction implies that estimating the structural support of our latent structure model reduces to estimating three $3$-dimensional parameters, so that we can reduce a nonparametric problem of curve-fitting to a parametric problem of the estimation of coefficients of a quadratic.  Nevertheless, as Fig. \ref{fig:HW_alpha_beta_many} shows, the accuracy of the estimation of the support can depend considerably on the $\textrm{Beta}$ parameters themselves; in the uniform $(a=1, b=1)$ case, the estimated Bezier curve tracks the true Hardy-Weinberg curve nicely. At $(a=1, b=2)$, we retain most of this accuracy.
At both $(2,5)$ and $(5,5)$, we see a marked deviation between the estimated curve (in black) and the true Hardy-Weinberg curve (in red). In particular, as the parameters $(a,b)$ change, the points of the Beta distribution can cluster around a central mode or, alternatively, tend to drift further apart, toward the endpoints of the interval. These alterations in the shape of the underlying distribution can lead to a poor estimate of the parameters of the best-fitting quadratic through the estimated points, resulting, in turn, in a less-accurate estimate of the LSM's support.
\begin{figure}[ht]
\centering
 \subfloat[][$a=1,b=1$]{
   \includegraphics[width=0.45\textwidth]{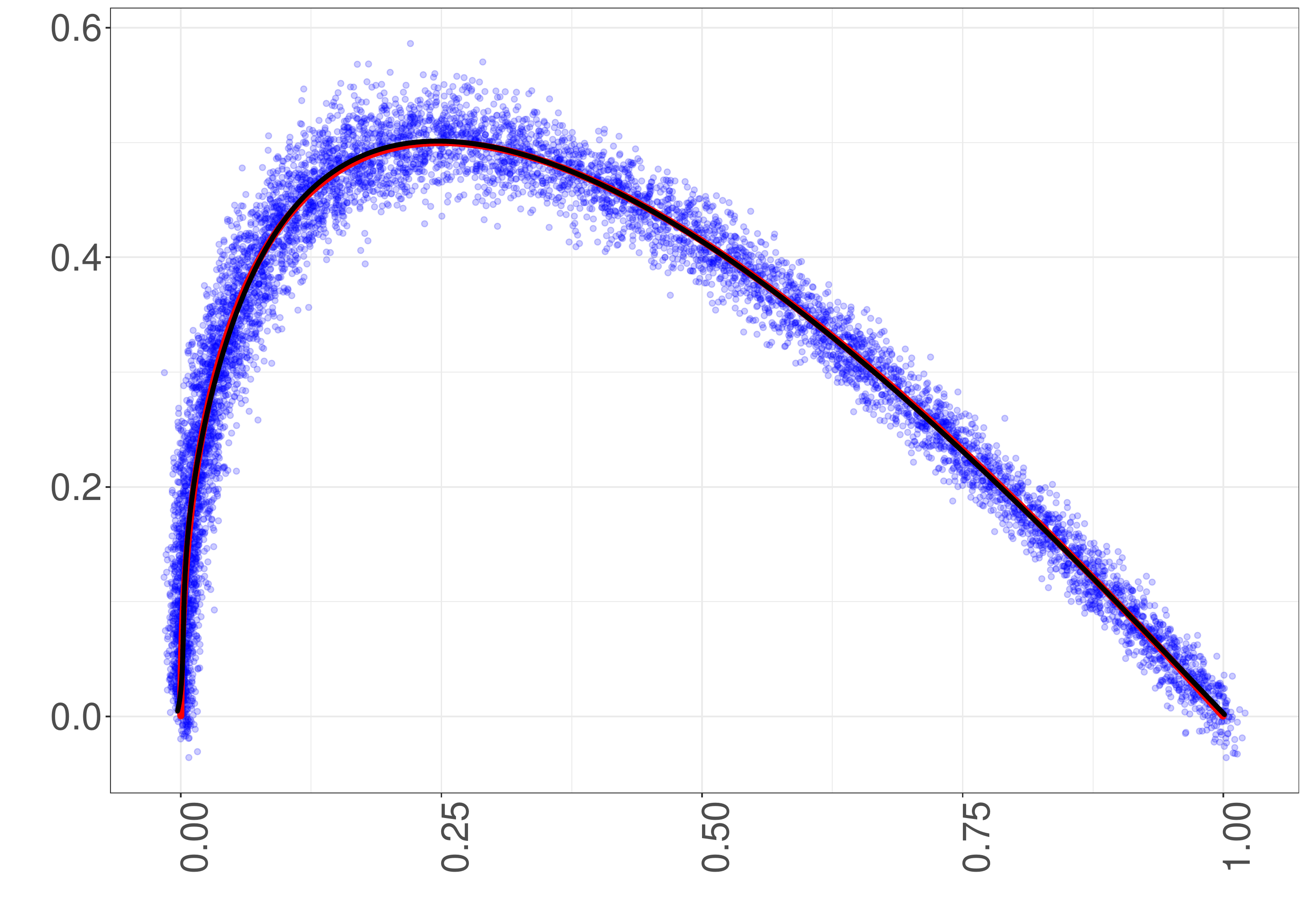}
 } \hfil
 \subfloat[][$a=1,b=2$]{
   \includegraphics[width=0.45\textwidth]{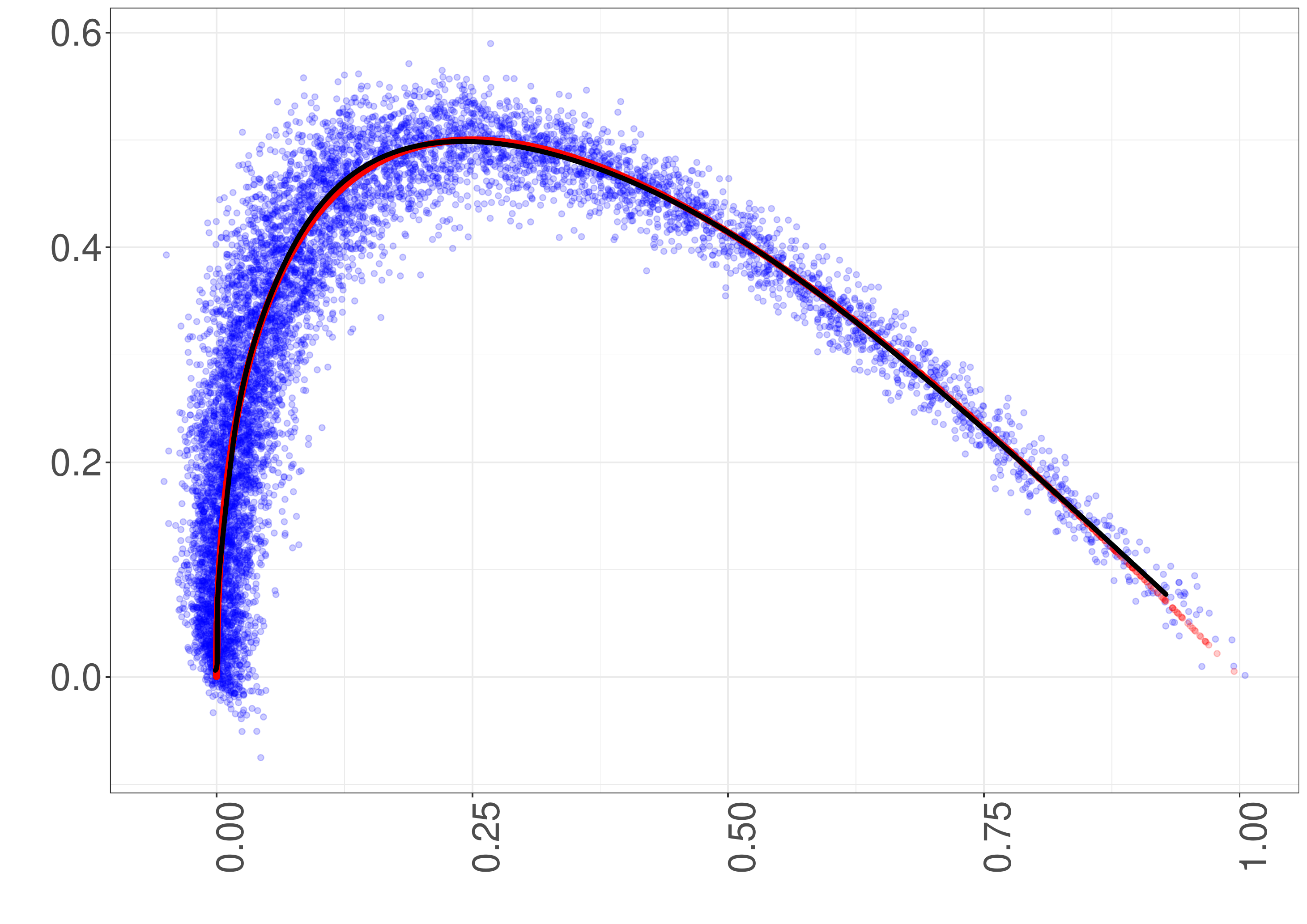}
 } \\
  \subfloat[][$a=2,b=5$]{
   \includegraphics[width=0.45\textwidth]{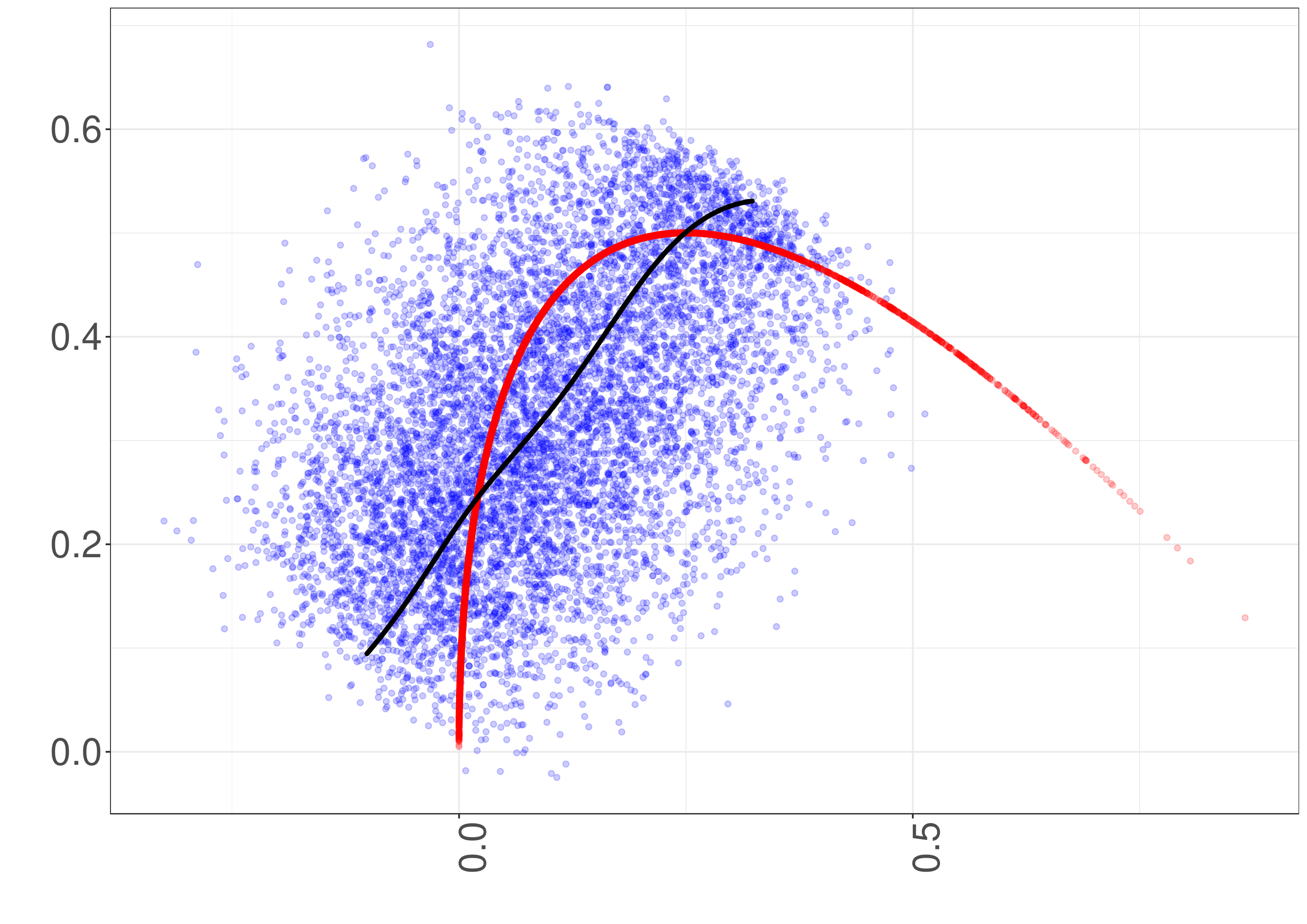}
 } \hfil
 \subfloat[][$a=5,b=5$]{
   \includegraphics[width=0.45\textwidth]{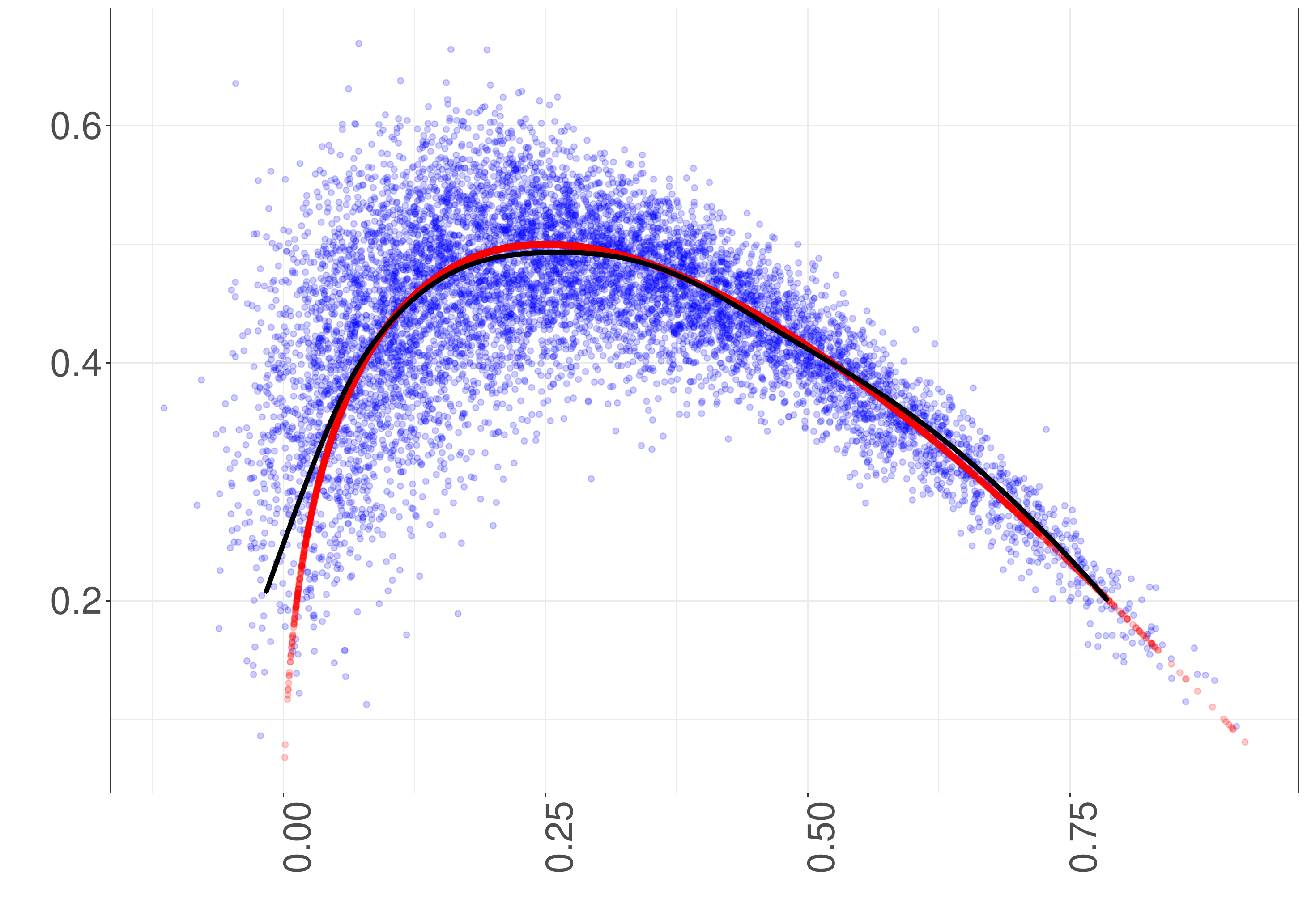}
 }
\caption{Bezier curve estimates through the estimated latent positions for underlying distributions $\textrm{Beta}(1,1)$ (top left); $\textrm{Beta}(1,2)$ (top right); $\textrm{Beta}(2,5)$ (bottom left); and $\textrm{Beta}(5,5)$ (bottom right).}
	\label{fig:HW_alpha_beta_many}
\end{figure}

Finally, we consider the projections onto the estimated Bezier curve $\hat{\mathcal{C}}$ of each latent position, and then the inverse images of these projected points $p^{-1}(\pi_{\hat{\mathcal{C}}}(\hat{X}_i))$ in the unit interval.  To obtain estimates for the Beta parameters, we consider the $M$-estimate defined by
\begin{equation}\label{eq:M-est_whole_enchilada}
\hat{a}, \hat{b}=\arg\max_{a,b} \sum_{i=1}^n \log g_{a,b}(p^{-1}(\pi_{\hat{\mathcal{C}}} (\hat{X}_i)))
\end{equation}
where $g_{a,b}$ is the $\textrm{Beta}(a,b)$ density. Eq. \eqref{eq:M-est_whole_enchilada} is a single-line summary of our entire methodology. In a latent structure model random graph with unknown but parametrically-determined support, we observe the adjacency matrix of the graph; compute the adjacency spectral embedding to yield the estimated latent positions $\hat{X}_i$; use the consistency of these estimates for the true latent positions $X_i$ to obtain an accurate estimate $\hat{\mathcal{C}}$ of the true structural support $\mathcal{C}$; then use the pullbacks of the projections $\pi_{\hat{\mathcal{C}}}(\hat{X}_i)$ as inputs into an $M$-estimate for the parameters of our underlying distribution $G_{\theta}$. Because there is now distortion through curve estimation, projection and pullback, we conjecture that consistency, at the parametric rate, is achievable here, but that asymptotic efficiency may well be lost.

%
%
%
%

Morever, there are two particular issues with Eq.~\eqref{eq:M-est_whole_enchilada} that bear noting.  The first is that any individual point $p^{-1}(\pi_{\hat{\bC}}(\hat{X}_i))$ may correspond to an endpoint of the unit interval, at which the underlying density may be zero (as in the Beta case). To avoid this numerical artifact, we scale these points slightly, by an infinitesimal $\epsilon>0$.  We underscore again that the consistency of the latent position estimates \cite{lyzinski15_HSBM} implies that, as the sample size $n$ grows, this adjustment affects an increasingly smaller fraction of estimated latent positions, and thus does not impact our limiting results. But it does render necessary certain adjustments in the finite-sample case. The second issue, as we mentioned earlier, is that the while the Bezier curve can be accurately estimated in the {\em limit} for any $a, b$, the finite-sample case is trickier, and the error inherent in the estimation of the support can have unpleasant downstream consequences for the estimation of the underlying parameters.

%
%
%
Predictably, the deteriorating quality of the Bezier curve estimate also impacts the mean squared error of our $M$-estimates for $a$ and $b$. In Tables \ref{table:1000} and \ref{table:8000}, we present the MSE of our latent position estimates for $(a=1, b=1)$; $(a=1, b=2)$; $(a=2, b=5)$; $(a=2,b=2)$; $(a=5,b=5)$, with the sample sizes of $n=1000$ and $n=8000$. Note the sharp contrast of the MSE for the $(a=5, b=5)$ case, which reflects the challenge of estimating the support even when it is a parametrically specified curve. The impact of the parameters of the underlying distribution $G_{\theta}$ on subsequent inference is a topic of current work. Indeed, the distressingly large mean-squared error for certain values of $(a, b)$ highlights the utility, in theory and practice, that one could derive from a second-order Berry-Esseen result describing precisely how robust this procedure is to values of these parameters.


\begin{table}[htp]
	\caption{Mean-squared error of Beta parameters in an H-W LSM. \\Sample size $n=1000$}
	\label{table:1000}
	\begin{tabular}{|c|c|c|c|c|c|}	
		\hline
		$\mathrm{MSE}$ & $a=1,b=1$ & $a=1,b=2$ & $a=2,b=5$ & $a=2,b=2$ & $a=5,b=5$ \\ \hline 
		$\mathbf{X}$ & $(0.0061,0.0051)$ & $(0.00068,0.0028)$& $(0.0044,0.039)$ & $(0.0089,0.011)$ & $(0.051,0.051)$ \\ \hline
		$\hat{\mathbf{X}}$ (inverse HW) & $(0.006,0.005)$ & $(0.004,0.019)$ & $(0.4, 2.68)$ & $(0.055,0.033)$ &$(1.14,0.99)$ \\ \hline
		$\hat{\mathbf{X}}$ (inverse Bezier) & $(0.019, 0.02)$ & $(0.08, 0.91)$ & $(1.1, 13.52)$ & $(0.796,0.836)$ & $(14.15, 14.18)$ \\ \hline 
		$\text{RE}(\text{true} \mathbf{X}, \text{inverse HW})$ & $(1,1)$ & $(5.9, 6.8)$ & $(90,68.7)$ & $(6.2, 3)$ &$(22.4,19.4)$ \\ \hline
		$\text{RE}(\text{true} \mathbf{X}, \text{inverse Bezier})$ & $(3.1,3.9)$ &$(117,6.3)$ &$(250,346)$ &$(89,76)$ &$(277,278)$ \\ \hline
	\end{tabular}
\end{table}

\begin{table}[htp]
	\caption{Mean-squared error of Beta parameters in an H-W LSM. \\Sample size $n=8000$}
	\label{table:8000}
	\begin{tabular}{|c|c|c|c|c|c|}	
		\hline
		$\mathrm{MSE}$ & $a=1,b=1$ & $a=1,b=2$ & $a=2,b=5$ & $a=2,b=2$ &$a=5,b=5$ \\ \hline 
		$\mathbf{X}$ & $(0.00015, 0.000083)$ & $(0.00014, 0.00097)$& $(0.0013,0.0098)$ & $(0.0008, 0.0007)$ &$(0.0062,0.0039)$ \\ \hline
		$\hat{\mathbf{X}}$ (inverse HW) & $(0.00023,0.000097)$ & $(0.00015,0.0012)$ & $(0.19, 1.04)$ & $(0.0013, 0.0013)$ & $(0.19,0.13)$ \\ \hline
		$\hat{\mathbf{X}}$ (inverse Bezier) & $(0.0011, 0.0011)$ & $(0.01, 0.14)$ & $(0.61, 10.92)$ & $(0.267,0.267)$ & $(11.55, 11.48)$ \\ \hline
	   $\text{RE}(\text{true} \mathbf{X}, \text{inverse HW})$ & $(1,5,1.1)$ & $(1, 1.2)$ & $(146,106)$ & $(1.6, 1.8)$ &$(30.6,33.3)$ \\ \hline
		$\text{RE}(\text{true} \mathbf{X}, \text{inverse Bezier})$ & $(7.3,13.2)$ &$(71,144)$ &$(469,1114)$ &$(333,381)$ &$(1862,2943)$ \\ \hline
	\end{tabular}
\end{table}

We have, thus far, focused on numerical estimation for latent structure models when (i) the support is known and the underlying distribution is parametric, and (ii) the support is unknown but parametrically specified, and the underlying distribution is parametric. In Tables \ref{table:1000} and \ref{table:8000}, we see that for $\textrm{Beta}(2,5)$, questionable quality of the Bezier curve estimate can have negative consequences for subsequent inference. As a transition to the case of nonparametric estimation for the structural support curve, we consider a two-sample test in the Hardy-Weinberg case. Let $\bA_1$ and $\bA_2$ be two independent adjacency matrices for a pair of latent structure models, both with underlying distribution $\textrm{Beta}(2,5)$. Let $\hat{\bX}_1$ and $\hat{\bX}_2$ be the associated adjacency spectral embeddings. Instead of curve-fitting, we use \texttt{isomap} \cite{isomap_science} to estimate inter-point geodesic distances between the projections of the estimated points $\hat{X}_{1,i}$ (the $i$th row of $\hat{\bX}$, for $1 \leq i \leq n$) onto the unknown curve, and we scale these inter-point distances to the unit interval. We repeat this process with the $\hat{X}_{2,i}$ points.  Thus we now have two sets of points in the unit interval, and we conduct a Kolmogorov-Smirnov test of equality of distribution. For the alternative, we consider the case when one of the graphs is generated by $\textrm{Beta}(2,5)$ and the other by $\textrm{Beta}(3,4)$.  Again, we use \texttt{isomap} to estimate the inter-point geodesic distances between the projections of the estimated latent positions, and conduct the same Kolmogorov-Smirnov test. We find that the $p$-values in the case of the alternative (unequal distributions) are stochastically smaller than the $p$-values under the null. This illustrates that even when parametric curve estimation goes awry, a nonparametric procedure can still be feasible for some subsequent inference tasks.

This leads us directly to our last example, noteworthy because it provides a statistically principled resolution to an important open question in neuroscience. This is an illustration of the utility of the latent structure model for estimation and subsequent inference, even when the structural support and underlying distribution are neither known nor parametrically specified.

To situate this in context, we summarize material described in far more extensive detail in \cite{MBStructure} (and encapsulated again in \cite{athreya_survey}). In particular, recent developments in neuroscience and imaging technology have rendered possible the full mapping of the {\em Mushroom Body} connectome of the larval {\em Drosophila} brain (see \cite{EichlerSubmitted}), which consists of four distinct neuron types--- Kenyon Cells (KC), Input Neurons (MBIN), Output Neurons (MBON), Projection Neurons (PN)---and two distinct hemispheres (right and left).  This connectome can be condensed into a weighted, directed adjacency matrix, specifying which neurons in the mushroom body are synaptically connected to which other neurons.

Our spectral embedding procedure can be adapted for this weighted, directed adjacency matrix, and a suitable embedding dimension can be estimated from the data (again, see \cite{MBStructure} for full details on the spectral decomposition and dimension estimation herein). In order to discern potential differences across the right and left brains, we separately embed the left- and right-hemisphere subgraphs. Neuroscientists conjecture that the right and left hemispheres are {\em bilaterally homologous}--that is,  ``structurally similar." But prior to the formal elucidation of a latent structure model, it was difficult even to {\em frame} this question as a suitable test of hypothesis, let alone provide a principled resolution to it.

However, by considering the mushroom body connectome as a latent structure model with nonparametric structural support and nonparametric underlying distribution, such a hypothesis test becomes both straightforward to construct and feasible to implement. We focus on the estimated latent positions corresponding to the KC neurons and we once again use \texttt{isomap} to learn the structure of the associated support {\em nonparametrically}. As before, \texttt{isomap} returns inter-point geodesic distances between projections of the estimated latent positions; we scale these to yield points $\hat{Y}_i$ in the unit interval, and then feed $\hat{Y}_i$ into a Kolmogorov-Smirnov test for equality of underlying distributions for the right and left KC neurons.

Figure \ref{fig:MBStructure_isomap_1} represents this visually.  Because \texttt{isomap} estimates inter-point distances and not the unknown support curve in $\mathbb{R}^d$, with $d=6$, we provide a two-dimensional visualization of this estimated curve (shown below in red and described in detail in \cite{MBStructure}), representing the structural support of the KC neurons.  The top panels of Fig. \ref{fig:MBStructure_isomap_1} show the estimated latent positions for both the left and right hemisphere KC neurons, as well as a two-dimensional version of the estimated support curve for the right hemisphere alone, which indicates that the support curve for the right hemisphere fits well the data for the left hemisphere. The central panels of Fig. \ref{fig:MBStructure_isomap_1} show the projections of the estimated latent positions for each hemisphere onto the appropriate estimated support curve for that particular hemisphere. Note that these panels are a two-dimensional representation---the actual projections are in $\mathbb{R}^6$, not in $\mathbb{R}^2$. The bottom panels of Fig. \ref{fig:MBStructure_isomap_1} supply a kernel density estimate for the underlying distribution $G$ of the latent structure model for each hemisphere.
\begin{figure}[htp]
	\centering
	\subfloat[][Estimated support of KC neurons, with right hemisphere estimated support curve in both]{
		\includegraphics[width=0.7\textwidth]{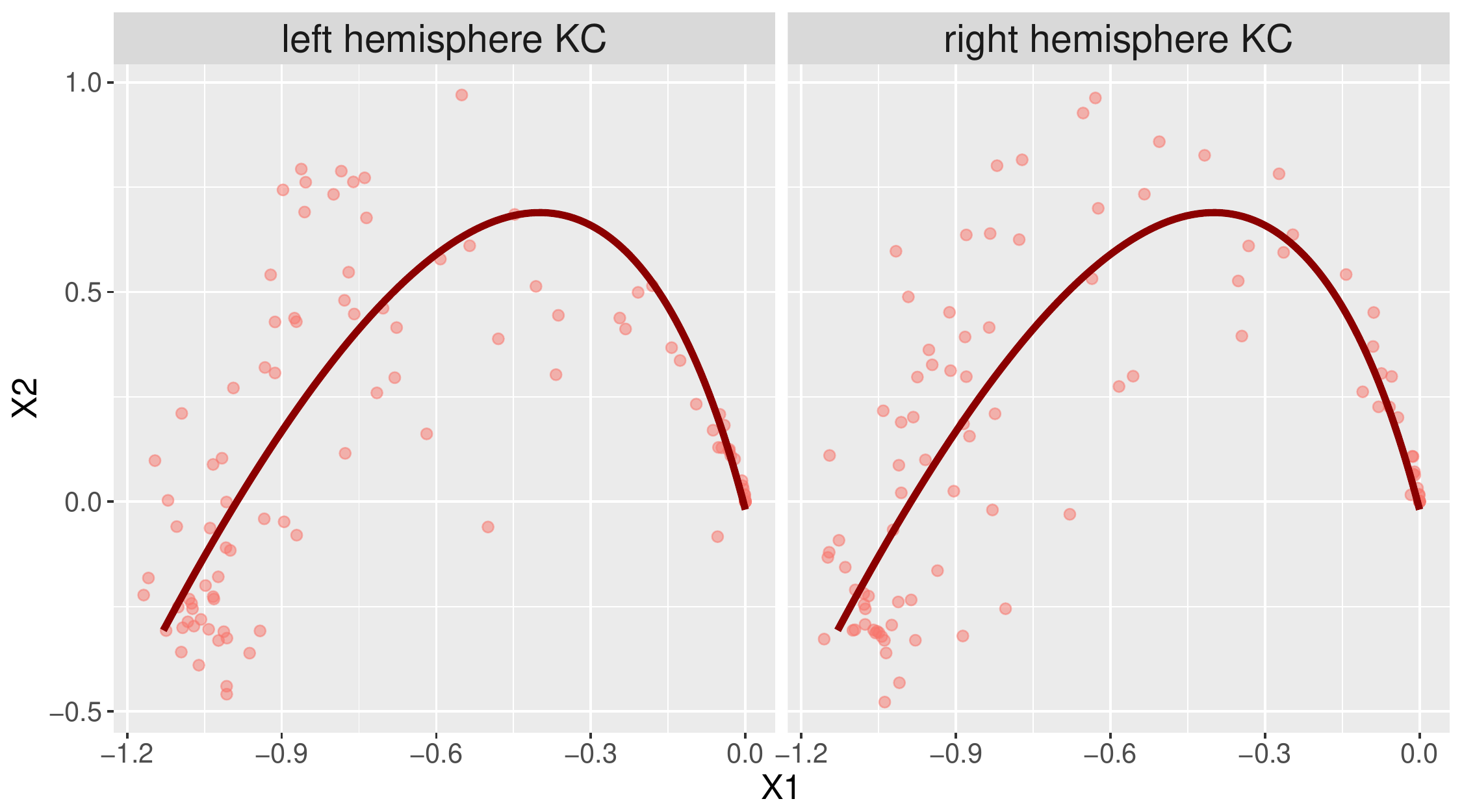}
	} \\
	\subfloat[][Projection of estimated latent positions on to estimated support for each hemisphere]{
		\includegraphics[width=0.7\textwidth]{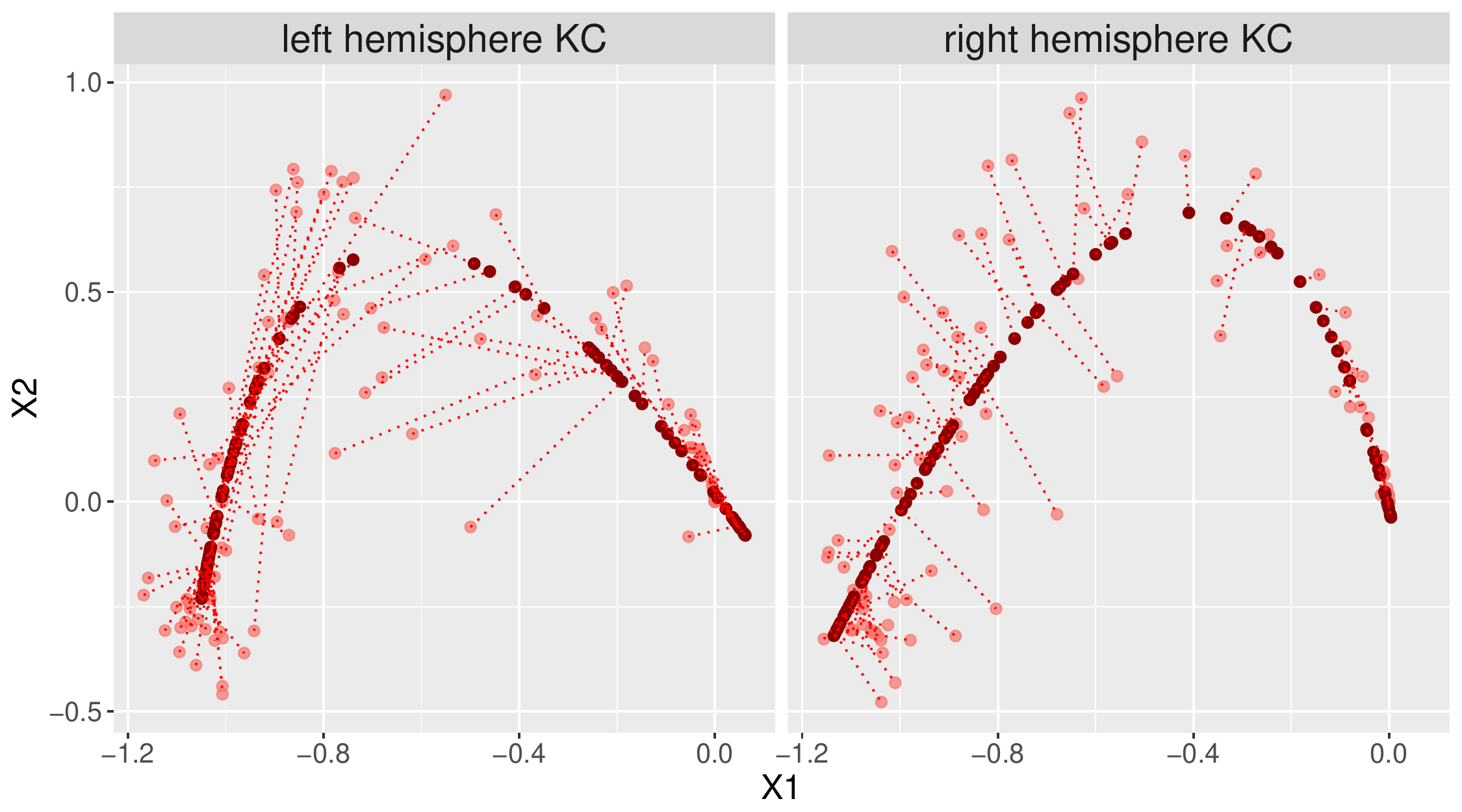}
	} \\
	\subfloat[][Estimated density for underlying distribution for each hemisphere]{
		\includegraphics[width=0.7\textwidth]{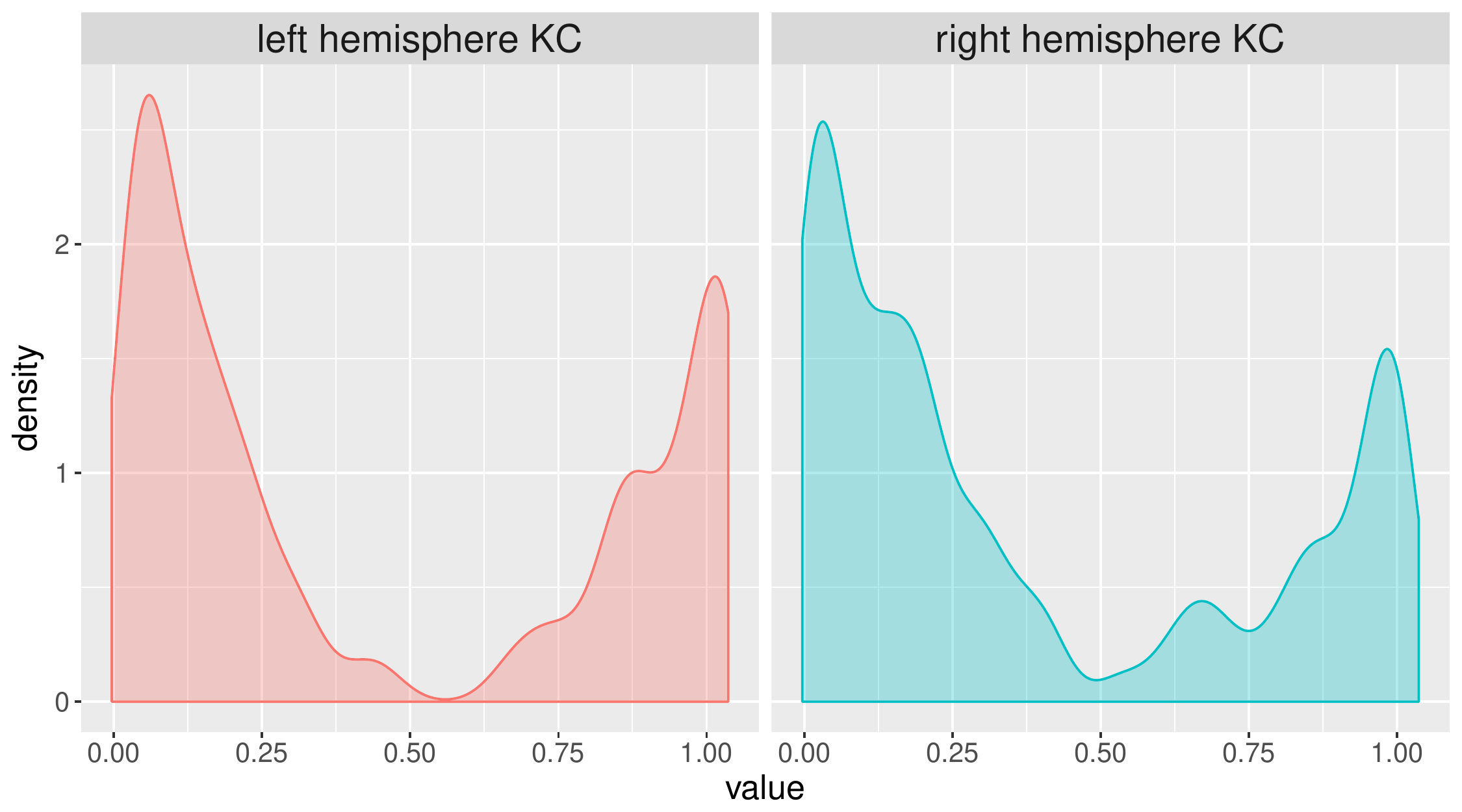}
	}\\
	\caption{Two-dimensional representation of the estimated structural support, projection onto this estimated support for estimated latent positions for KC neurons in the Mushroom Body connectome, and density estimates for the underlying distribution. }
	\label{fig:MBStructure_isomap_1}
\end{figure}

We denote by $\hat{Y}_i^R$ and $\hat{Y}_i^L$ the two sets of scaled inter-point distances obtained via \texttt{isomap} from the estimated latent positions for the right and left hemisphere KC neurons, respectively. Using these as inputs for a Kolmogorov-Smirnov test, we find that we do not reject the null hypothesis of equality of distribution---in fact, we obtain a $p$-value here of $0.68$. Moreover, Fig. \ref{fig:MBStructure_isomap_1} depicts this evidence in favor of our failure to reject the null: a pair of quite similar density estimates for the underlying distribution of the latent structure model for right and left hemispheres. We emphasize, though, that the latent position distribution is not invariant to a reparametrization of the structural support curve under the transformation $t \mapsto 1-t$ of the unit interval. Indeed, if one of the sets of projected points in the middle panels of Fig. \ref{fig:MBStructure_isomap_1} were so reparametrized, this structural symmetry would be destroyed. We find, rather encouragingly, that when we reparametrize one support curve under this transformation, our $p$-value drops to essentially zero. This sensitivity to orientation allows us to rule out the possibility of an underlying uniform distribution---indeed, any underlying symmetric distribution---for the KC neurons. We conclude that we do not reject bilateral homology for KC neurons in the mushroom body, but we do reject the hypothesis of uniformity of underlying distribution. 

The implications of this Kolmogorov-Smirnov $p$-value merit deeper study. For a known curve, the asymptotic validity of this $p$-value
         follows from our earlier results. A learnt curve is a different matter, however.  Nevertheless, we present in Fig. \ref{fig:KS-pval-close-uniform} simulation evidence to suggest that such $p$-values behave nearly uniformly under the null. For this figure, under the null hypothesis of equality of underlying distributions for points on the Hardy-Weinberg curve, we generate a pair of LSM graphs whose latent positions arise as the images on the Hardy-Weinberg curve of a collection of $n=500$ i.i.d Beta $(a=2, b=5)$ points in the unit interval.  We then use the adjacency spectral embeddings for each of the two graphs as inputs into \texttt{isomap}; these inter-point geodesic distances are scaled to land in the unit interval. We finally conduct a Kolmogorov-Smirnov test on these two sets of points. In the alternative, we consider one graph to be generated with Beta ($a=2, b=5$), and the other with Beta ($a=3, b=4$). A rigorous analysis of $p$-value behavior in this type of  Kolmogorov-Smirnov test is the subject of ongoing investigation. 
\begin{figure}[htp]
	\centering
	\includegraphics[width=0.5\textwidth]{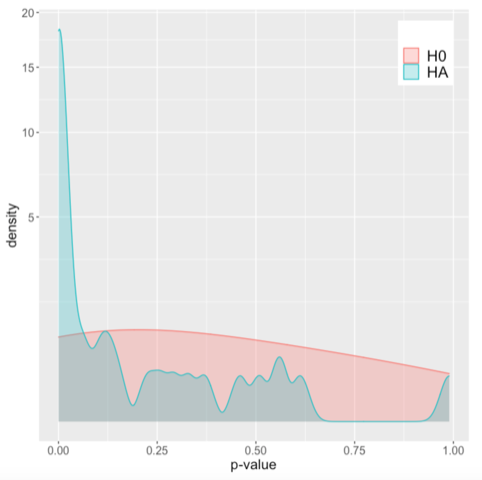}
	\caption{$p$-value distributions for the Kolmogorov-Smirnov test under null ($H_0$) and alternative ($H_A$)}
	\label{fig:KS-pval-close-uniform}
	\end{figure}

 	\section{Conclusion}\label{sec:Conclusion}
 	
 In closing, the latent structure model formalizes an intuitive premise that many random networks have both probabilistic and geometric structure. Defining latent structure models within the class of random dot product graphs provides the advantage of rendering these two components distinctly. First, we specify the probabilistic component of a one-dimensional LSM with an {\em underlying distribution} $G$ on the unit interval, and second, we delineate the geometric structure of the graph by specifying a curve $\mathcal{C}$ as the {\em structural support} of the latent position distribution $F$ of the random dot product graph, where $F$ is defined by $F=\mu_G(p^{-1})$ and $p:[0,1] \mapsto \bC$ is the arclength parametrization of $\mathcal{C}$ on the unit interval. The map $p$ is the connecting thread between points $t_i$ generated in $[0,1]$ according to $G$ and the corresponding images $X_i=p(t_i)$ that are the latent positions for the random graph. Because one-dimensional latent structure models are random dot product graphs that also depend on the parametrization $p$ of the structural support curve $\mathcal{C}$, they have two nonidentifiabilities. One is an immediate consequence of the invariance of the random dot product graph to orthogonal transformations of the latent positions. The second nonidentifiability arises from the fact that the map $p$ can be orientation-reversed by considering $t \mapsto 1-t$.    
 
 Framed as random dot product graphs with structural constraints, latent structure models provide an intermediate point between simple stochastic block models and more general, unconstrained random dot product graphs. Furthermore, by separating the probabilistic and geometric sources of network regularity, we can construct latent structure models according to natural demarcations of increasing probabilistic or geometric complexity: the underlying distributions on the unit interval can be known, parametrically specified, or nonparametric; and similarly the structural support curves $\mathcal{C}$ can be known, parametrically specified, or nonparametric.   
 
 Our main result here is that, to perform efficient estimation of the parameters of a parametric latent structure model with known support, one needs only the estimated latent positions $\hat{X}_i$ arising from a spectral decomposition of an adjacency matrix generated by the true latent positions $X_i$. One does not need to observe the true latent positions themselves.  The efficiency of $M$-estimation via the adjacency spectral embedding is part of a broader program in which spectral decompositions of adjacency matrices are proven to be consistent and asymptotically normal, as well as to satisfy a Donsker-class functional central limit theorem.  Furthermore, the power of the adjacency spectral embedding extends beyond this efficiency.  Specifically, because the adjacency spectral embedding accurately estimates the true latent positions of a latent structure model, it can be simultaneously deployed in two directions: for classical estimation of the underlying distribution $G$, whether parametric or nonparametric, as well as for manifold learning or curve-fitting of the structural support $\mathcal{C}$.
 
 We provide numerical simulations in the case of a latent structure model with underlying distributions belonging to the parametric $\textrm{Beta}(a,b)$ family on the unit interval, with structural support $\mathcal{C}$ the Hardy-Weinberg curve in the simplex. For estimating the underlying parameters $(a, b)$, we exhibit mean-squared error of a comparable order whether using parametric $M$-estimates of $(a, b)$ from the true or the estimated latent positions. Moreover, even if we do not assume full knowledge of this Hardy-Weinberg curve but merely constrain the estimate to be quadratic, we generate best-fitting Bezier curves through the point cloud of estimated latent positions, thus producing an estimate $\hat{\bC}$ for the structural support. Thereafter, we perform $M$-estimation with the points $\hat{Y}_i=p^{-1}(\pi_{\hat{\bC}}(\hat{X}_i))$ in the unit interval (recall that $\pi$ is the projection map). We show that $M$-estimation for $(a, b)$ using the points $\hat{Y}_i$, which are pullbacks of projections of $\hat{X}_i$ onto the estimated curve $\hat{\bC}$, also compares favorably with $M$-estimation for $(a, b)$ using the original, true latent positions $X_i$. We reiterate, though, that the accuracy of these classical statistical estimates for $(a, b)$ is impacted by the accuracy of the estimation for the support $\mathcal{\bC}$.  Ongoing work includes the development of a Berry-Esseen result that characterizes finite-sample performance of this $M$-estimation procedure and its dependence on the underlying distribution $G$. 
 	
 When the structural support is neither known nor parametrically specified, manifold learning procedures can be successfully exploited for subsequent inference. We demonstrate how the latent structure model provides theoretical underpinning for testing a hitherto-open neuroscientific question on bilateral homology in the right and left hemispheres of the larval Drosophila connectome. We model this connectome as a latent structure graph and focus on a specific type of neural cell, the Kenyon cell; we illustrate the use the estimated latent positions to learn the structural support for the Kenyon cells in the right and left Drosophila hemispheres. In practice, we leverage \texttt{isomap} to yield scaled, inter-point geodesic distances between estimated latent positions. We extract from this two sets of points in the unit interval, one for each hemisphere. A classical Kolmogorov-Smirnov test results in a failure to reject the null hypothesis that the right and left hemisphere subgraphs have the same underlying distributions.  However, if we reorient the estimated curve for one hemisphere but not the other, by considering $t \mapsto 1-t$ in the map $\hat{p}:[0,1] \mapsto \hat{\bC}$ for one hemisphere, we find that we {\em do} reject the null hypothesis of equality of underlying distribution.  From the point of view of geometric structure, this lack of symmetry is reassuring, and moreover it allows us to reject the hypothesis that the underlying distribution is uniform. 
 
 Current research concerns theoretical justification of efficiency when the support is unknown and must be learned, an analysis of kernel density estimation and testing for the nonparametric case, as well as the formal development of latent structure models with higher-dimensional support.  The simplicity and approximability of latent structure models is an argument for their use in representing network phenomena, and the efficiency of spectrally-derived $M$-estimates for LSM parameters is a useful dividend. The latent structure model harmonizes classical statistics, geometry, and manifold learning, and as such is an elegant platform for network inference.
 \section{Acknowledgments}\label{sec:ack}
 The authors gratefully acknowledge support from the Defense Advanced Research Programs Agency (DARPA) through the ``Data-Driven Discovery of Models" (D$3$M) Program;
 the ``Fundamental Limits of Learning" (FunLoL) Program via SIMPLEX; the Naval Engineering Education Consortium (NEEC) Office of Naval Research (ONR) Award Number N00174-19-1-0011; and the Air Force Office of Scientific Research (AFOSR) Grant FA9550-17-1-0280
 ``Foundations and Algorithms for Statistics and Learning for Data in Metric Spaces."

\bibliographystyle{plain}
\bibliography{biblio_summary}
\end{document}